\newtheorem{theorem}{Theorem} 
\newtheorem{lemma}{Lemma}
\newtheorem{definition}{Definition} 
\newtheorem{remark}{Remark}
\newtheorem*{backw}{Congestion}
\newtheorem*{neigh}{Neighborhood}
\newtheorem*{causality}{Causality}
\newcommand{\principal}{\mathrm{pr}}
\newcommand{\noise}{\mathrm{ns}}
\newcommand{\co}{\mathrm{b}}
\def\ex{\mathbb E}
\newcommand{\Parent}{\mathrm{Potential}^{-}}
\newcommand{\pot}{\mathrm{Potential}}
\newcommand{\Pot}{\mathrm{Potential}}
\newcommand{\Cong}{\mathrm{Congestion}}
\begin{document}

\title{Stochastic Control via Entropy Compression}

\author{
Dimitris Achlioptas
\thanks{Research supported by NSF grant CCF-1514128.}\\ 
UC Santa Cruz\\
{\small \texttt{optas@cs.ucsc.edu}}
\\
\and 
Fotis Iliopoulos\thanks{ 
Research supported by NSF grant CCF-1514434. Part of of this work was done while at Adobe Research.} \\ 
UC Berkeley\\
{\small \texttt{fotis.iliopoulos@berkeley.edu}}
\and
Nikos Vlassis\\
Adobe Research\\
{\small \texttt{vlassis@adobe.com}}
}

\date{\empty}

\maketitle

\begin{abstract}
We consider an agent trying to bring a system to an acceptable state by repeated \mbox{probabilistic action.} Several recent works on algorithmizations of the Lov\'{a}sz Local Lemma (LLL) can be seen as establishing sufficient conditions for the agent to succeed. Here we study whether such stochastic control is also possible in a noisy environment, where both the process of state-observation and the process of state-evolution are subject to adversarial perturbation (noise). The introduction of noise causes the tools developed for LLL algorithmization to break down since the key LLL ingredient, the sparsity of the causality (dependence) relationship, no longer holds. To overcome this challenge we develop a new analysis where entropy plays a central role, both to measure the rate at which progress towards an acceptable state is made and the rate at which noise undoes this progress. The end result is a sufficient condition that allows a smooth tradeoff between the intensity of the noise and the amenability of the system, recovering an asymmetric LLL condition in the noiseless case.
\end{abstract}

\newpage 

\section{Introduction}\label{setting}

Consider a system with a large state space $\Omega$, hidden from view inside a box. On the outside of the box there are lightbulbs and buttons. Each lightbulb corresponds to a set $f_i \subseteq \Omega$ and is lit whenever the current state of the system is in $f_i$. We think of each set $f_i$ as containing all states sharing some negative feature $i \in [m]$ and refer to each such set as a \emph{flaw}, letting $F = \{f_1, f_2, \ldots, f_m\}$. 
For example, if the system corresponds to a 
graph $G$ with $n$ vertices each of which can take one of $q$ colors, then $\Omega = [q]^n$, and we can define for each edge $e_i$ of $G$ the flaw $f_i$ to contain all assignments of colors to the vertices of $G$ that assign the same color to the endpoints of $e_i$. 
Following linguistic convention, instead of mathematical, we will say that flaw $f$ is present in state $\sigma$ whenever $f \ni \sigma$ and that state $\sigma$ is \emph{flawless} if no flaw is present in $\sigma$. The buttons correspond to actions, i.e., to mechanisms for state evolution. Specifically, taking action $a$ while in state $\sigma$ moves the system to a new state $\tau$, selected from a probability distribution that depends on both $\sigma$ and $a$. 

Outside the box, an agent called the \emph{controller} observes the lightbulbs and pushes buttons, in an effort to bring the system to a flawless state. Specifically, if $O(\sigma) \in \{0,1\}^m$ denotes the lightbulb bitvector, with 1 corresponding to lit, the controller repeatedly applies a function $P$, called a \emph{policy}, that maps $O(\sigma)$ to a distribution over actions. Thus, overall, state evolution proceeds as follows: if the current (hidden) state is $\sigma \in\Omega$, the controller observes $O(\sigma)$ and samples an action from $P(O(\sigma))$; after she takes the chosen action, the system, internally and probabilistically, moves to a new (hidden) state $\tau$, selected from a distribution that depends on both $\sigma$ and the action taken. 

Our work begins with the observation that several recent results~\cite{moser,MT,szege_meet, SrinivasanPerm, AI, JACM, HV, Harmonic, Commu} on LLL algorithmization can be seen as giving sufficient conditions for a controller as above to be able to bring the system to a flawless state quickly, with high probability. Motivated by this viewpoint we ask if conditions for LLL algorithmizations can be seen as \emph{stability criteria} and  give results for more general settings, e.g., Partially Observable Markov Decision Processes (POMDPs).  Given the capacity of LLL algorithmization arguments to establish convergence in highly non-convex domains, a major pain point in control theory, we believe that bringing such arguments to stochastic control is a first step in a fruitful direction.  In order to move in that direction we generalize the setting described so far in two ways:
\begin{itemize}
\item
The mapping $O$ from states to observations is \emph{stochastic}: the lightbulbs are unreliable, exhibiting both false-positives and false-negatives. 
\item 
Both the environment surrounding the system and the implementation of actions are \emph{noisy}: the controller is not the only agent affecting state evolution and flaws may be introduced into the state for reasons unrelated to her actions, even spontaneously. 
\end{itemize}

The question, naturally, is whether sufficient conditions for quick convergence to flawless states can still be established in this new setting. We answer the question affirmatively and show, in a precise mathematical sense, that the less internal conflict there is in the system, the more noise the controller can tolerate. In order to prove this we require the controller to be \emph{focused} and to \emph{prioritize}. That is, we will assume that the flaws are ordered by priority according to an arbitrary but fixed permutation $\pi$ of $F$, and we will ascribe the action taken by the controller in each step to the present flaw (focus) of highest priority (prioritization). The analysis will then take into account both how good the actions are at ridding the state of that flaw and how damaging they are in terms of introducing new flaws. In particular, with this attribution mechanism in place, and similarly to LLL algorithmization arguments, we will say that flaw $f_i$ can cause flaw $f_j$ if there exists a state transition with non-zero probability under the policy, from a state in which $f_i$ is the highest priority flaw and $f_j$ is absent to a state in which $f_j$ is present. 

The main challenge we face is that in the presence of noise the causality relationship becomes dense. To overcome this we develop a new analysis in which causality is not a binary relationship, but one weighted by the \emph{frequency} of interactions. In particular, our condition guaranteeing that the controller will succeed within a reasonable amount of time allows the causality graph to become arbitrarily dense, if the frequency of interactions is sufficiently small. Turning the sparsity of the causality relationship into a \emph{soft} requirement is a major departure from the LLL setting and our main technical contribution. We do this by developing an entropy compression argument, in which we carefully amortize the entropy injected into the system to encode the effect of noise on the state trajectory. It is worth pointing out that even though our technique applies to the far more general noisy setting, in the absence of noise it recovers the main result of~\cite{JACM}, thus providing a smooth relationship between lack of internal conflict and robustness to noise.

\section{Formal Setting and Statement of Results}

In the absence of observational and environmental noise we can think of the state evolution under a policy $P$ as a random walk on a certain digraph on $\Omega$. Specifically, at each flawed state $\sigma \in \Omega$, for each action in the support of $P(O(\sigma))$, there is a bundle of outgoing arcs of total probability 1, corresponding to the state-transitions from $\sigma$ under this action. The convolution of $P(O(\sigma))$ with the distribution inside each bundle yields the state-transition probability distribution from each flawed state $\sigma$. 

The presence of observational and environmental noise both distorts the transition probabilities and introduces new transitions. For example, whenever observational noise causes $O(\sigma)$ to differ from the set of flaws truly present in $\sigma$, the controller may chose an action (from the support of $P(O(\sigma))$) under which there are transitions from $\sigma$ that were not present in the noise-free digraph. We model the overall distortion induced by noise by taking the noise-free digraph, which we think of as the \emph{principal} mechanism for state evolution, reducing the probabilities on all its edges uniformly by a factor of $1-p$, and allowing the leftover probability mass to be distributed arbitrarily, in order to form the noise. More precisely:
\begin{itemize}
\item
Let $D_{\principal}$ be the digraph on $\Omega$ of possible state-transitions under policy $P$, with a self-loop added at every flawless state. Let $\rho_{\principal}$ be the $P$-induced state-transition probability distribution, augmented so that all self-loops at flawless states have probability 1.
\item
Let $D_{\noise}$ be an {\bf arbitrary} digraph on $\Omega$. For each vertex $\sigma$ in $D_{\noise}$, let $\rho_{\noise}(\sigma,\cdot)$ be an {\bf arbitrary} probability distribution on the arcs leaving $\sigma$. 
\item
We will analyze the Markov chain on $\Omega$ which at every $\sigma \in \Omega$, with probability $p$ follows an arc in $D_{\noise}$ and with probability $1-p$ follows an arc in $D_{\principal}$. Formally, for every $\sigma \in \Omega$,
\begin{align*}
\rho(\sigma,\cdot)  =  (1-p) \cdot  \rho_{\principal}(\sigma, \cdot) + p \cdot  \rho_{\noise}( \sigma, \cdot ) \enspace.
\end{align*}
We assume that the system starts at a state $\sigma_1$, according to some unknown probability distribution $\theta$.
\end{itemize}

Requiring that the effect of noise is captured by a mixture of the original (principal) chain and an arbitrary chain is the only assumption that we make. In particular, by allowing $D_{\noise}$ and $\rho_{\noise}$ to be arbitrary we forego the need to posit specific models of observational and environmental noise, lending greater generality to our results. To see this, let $U(\sigma)$ denote the set of flaws actually present in $\sigma$ (and, slightly abusing notation, also the characteristic vector of $U(\sigma) \subseteq F$). In any step where the state transition distribution is not the principal one, we can think of this as occurring because $O(\sigma) \neq U(\sigma)$ and the distribution corresponds to $P(O(\sigma))$, or because $O(\sigma) \neq U(\sigma)$ and the distribution does not \emph{even} correspond to $P(O(\sigma))$, or because $O(\sigma) = U(\sigma)$ but, silently, the distribution followed is different from $P(O(\sigma))$. In particular, notice that whenever $O(\sigma) = \mathbf{0}$, the controller thinks  she has arrived at a flawless state and, thus, authorizes a self-loop with probability 1. In such a case, the fact that the system will follow $\rho_{\noise}$ with probability $p$ means that we are allowing the noise not only to trick the controller to inactivity but also to silently move the system to a new state. Similarly, after the system arrives at a flawless state, i.e., $U(\sigma) = \mathbf{0}$, with probability $p$ it will then follow an arc in $D_{\noise}$, potentially to a flawed state. We allow this to occur to be consistent with (i) the idea that observational noise can occur at any state, even a flawless one, thus causing unneeded, potentially detrimental action, and (ii) with the idea that flaws can be introduced spontaneously from the environment at any state. Our goal is, thus, to prove that from \emph{any} initial state, after a small number of steps, the system will reach a flawless state, despite the noise. As we will see, what will matter about the noise is the extent to which noise-induced transitions introduce flaws in the state.

Let $D = D_{\principal} \cup D_{\noise}$. To avoid certain trivialities we will assume that there exists a constant $B < \infty$ such that $2^{-B} < \rho(\sigma,\tau) < 1-2^{-B}$ for every arc $(\sigma,\tau) \in D$. For each state $\sigma$, we denote the highest priority flaw present in $\sigma$ by $\pi(\sigma)$;  if $\pi(\sigma) = f_i$, we label all arcs leaving $\sigma$ as $\sigma \xrightarrow{i} \cdot$, i.e., with the index of the flaw to which we attribute the transition (we use $i$ instead of $f_i$ as the label to lighten notation). We will refer to $\pi(\sigma)$ as the flaw \emph{addressed} at $\sigma$.
\begin{causality}
For an arc $\sigma \xrightarrow{i}  \tau$ in $D$ and a flaw $f_j$ present in $\tau$ we say that $f_i$ causes $f_j$ if  $f_j \not\ni \sigma$. 
The digraph on $[m]$ where $i \rightarrow j$ iff $D$ contains an arc such that $f_i$ causes $f_j$ is the causality digraph $C(D)$. 
\end{causality}
\begin{neigh}
The \emph{neighborhood} of a flaw $f_i$ in $C=C(D)$ is $\Gamma (f_i) = \{ f_i \} \cup \{f_j : i \to  j \text{ exists in $C$}\}$. 
\end{neigh}

For our condition we will need to bound from \emph{below} the entropy injected into the system in each step. To that end we define the potential of each flaw $f_i$ to be
\begin{equation}\label{eq:pot_def}
\pot(f_i) = \min_{\sigma: \pi(\sigma) = f_i} H[ \rho(\sigma, \cdot) ]  \enspace .
\end{equation}
We extend the definition to sets of flaws i.e.,  $\pot(S) = \sum_{f \in S} \pot(f)$, where $\pot(\emptyset) = 0$. 

In the absence of noise, $\Pot(f_i)$ expresses a lower bound on the diversity of ways to address flaw $f_i$, by bounding from below the ``average number of random bits consumed" whenever $f_i$ is addressed. Thus, it bounds from below the rate at which the controller explores the state space \emph{locally}. 
The presence of noise may decrease or may increase the potential. For example, if all arcs in $D_{\noise}$ are self-loops, then the noise is equivalent to the action-buttons ``sometimes not working" and its only (and very benign) effect is to slow down the exploration by a constant factor. At the other extreme, if $D_{\noise}$ is the complete digraph on $\Omega$ and $\rho_{\noise}$ is uniform, then (unless $p$ is extremely small) the situation is, clearly, hopeless. Correspondingly, even though the potential has been greatly increased, the causality relationship is complete. We note that, trivially, the potential of each flaw is bounded from below by the minimum entropy injected by the principal alone whenever the flaw is addressed, i.e., $\pot(f_i) \ge (1-p) \min_{\sigma : \pi(\sigma) = f_i} H[\rho_{\principal}(\sigma)]$.\smallskip

The other important characteristic of each flaw $f_i$ is its congestion, i.e., the maximum number of arcs with label $i$ that lead to the same state. For the same reason we would like the potential of a flaw to be big, we would like its congestion to be small: if arcs from different states in $f_i$ lead to the same state, then exploration slows down and the entropy injected into the system must be appropriately discounted in order to yield a good measure of the rate of state space exploration. To see this observe that $\pot(f_i)$ is independent of the destinations of the arcs leaving $f_i$ and compare the case where these destinations are all distinct with the case where they all lie in a small (bottleneck) set. As the congestion due to the principal and the congestion due to noise will have different effects, we need to account for them separately. Let $A_{\principal}(\sigma)$ denote the support of $\rho_{\principal}(\sigma, \cdot)$ and  $A_{\noise}(\sigma)$ denote the support of $\rho_{\noise}(\sigma, \cdot)$. 
\begin{backw}
For any flaw $f_i \in F$, let 
\begin{eqnarray*}
\Cong_{\principal}(f_i) 	& = & \max_{\tau \in \Omega} |\{\sigma \in f_i: \tau \in A_{\principal}(\sigma)\}| \\
\Cong_{\noise}(f_i) 		& = & \max_{\tau \in \Omega} |\{\sigma \in f_i: \tau \in A_{\noise}(\sigma)\}| 
\enspace .
\end{eqnarray*}
Let $\co_{\principal}^{f_i}  = \log_2 \Cong_{\principal}(f_i)$. Let $\co_{\noise}^{f_i}  = \log_2 \Cong_{\noise}(f_i)$. Let $\co_{\noise} = \max_{f_i \in F} \co_{\noise}^{f_i}$. 
\end{backw}

Let $C_{\principal}$ and $C_{\noise}$ be the causality graphs of $D_{\principal}$ and $D_{\noise}$, respectively, and let $\Gamma_{\principal}(f_i)$ and $\Gamma_{\noise}(f_i)$ be the corresponding neighborhoords.  Let $\Delta_i = \left|\Gamma_{\noise}  (f_i)\right|$. Recall that $h( p ) = - p \log_2 p  - (1-p) \log_2 (1 - p )$ is the \emph{binary entropy} of $p \in [0,1]$. To express the lost efficiency due to noise in addressing flaw $f_i$, we let
\begin{eqnarray*}
q_i(p) 	& = & p\left( \Delta_i\left( \co_{\noise}+\frac{5}{2}+h(p)\right) - 2 - h(p)\right) \\
		& \leq & p \, \Delta_i (\co_{\noise}+4) \enspace .
\end{eqnarray*}
Observe that $q_i(p)$ is independent of the policy and that its leading term is $p \Delta_i$. This means that, unlike the  LLL, the number, $\Delta_i$, of different flaws that may be introduced when addressing a flaw can be arbitrarily large if the \emph{frequency of interactions} between flaws, captured by $p$, is sufficiently small. Our main result establishes a condition under which the probability of not reaching a flawless state within $O(\log_2 |\Omega | +m)$ steps is exponentially small. To state it wefine for each flaw $f_i$,  
\[
\mathrm{Amenability}(f_i) =  \Pot(f_i) - \co_{\principal}^{f_i} \enspace .
\]

\begin{theorem}\label{Molloy}
 If for every flaw $f_i \in F$,
\begin{align}\label{eq:amenable}
\sum_{f_j \in \Gamma_{\principal}(f_i)}   2^{- 
\mathrm{Amenability}(f_j)+ q_j(p) }  < 2^{-(2+h(p))} \enspace ,
\end{align}
then there exists a constant $R>0$ depending on the slack in~\eqref{eq:amenable}, such that for every $s>1/2$, the probability of not reaching a flawless state after $Rs(\log_2 |\Omega | +m)$ steps is less than $\exp(-s)$.
\end{theorem}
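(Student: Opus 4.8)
The plan is to carry out, now in the presence of noise, the witness-forest/entropy-compression analysis underlying~\cite{JACM}. Fix $t=Rs(\log_2|\Omega|+m)$ and condition on the walk taking $t$ steps $\sigma_1\xrightarrow{w_1}\sigma_2\xrightarrow{w_2}\cdots\xrightarrow{w_t}\sigma_{t+1}$ with $\sigma_1,\dots,\sigma_t$ all flawed, where $w_i$ denotes the flaw addressed at $\sigma_i$. To this trajectory we attach a \emph{witness forest} $\Phi$: a forest of labelled rooted plane trees recording, step by step, which flaw was addressed and which flaw ``caused'' it (as in the Causality definition), with the roots being the flaws present at $\sigma_1$. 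The proof then has three parts: (i) a combinatorial description of the forests that can arise; (ii) a Witness Forest Lemma bounding $\Pr[\Phi\text{ is produced}]$ by a product of per-node weights; (iii) summing that bound over all $t$-node forests.

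For (i): because the controller is focused and prioritizes, any flaw present in $\sigma_{i+1}$ that was not present-and-lower-priority in $\sigma_i$ must have been caused by $w_i$ along $\sigma_i\to\sigma_{i+1}$, hence lies in $\Gamma_{\principal}(f_{w_i})$ if the $i$-th step followed $D_{\principal}$ and in $\Gamma_{\noise}(f_{w_i})$ if it followed $D_{\noise}$. Crucially, a node is \emph{not} allowed to branch to a noise-caused flaw: such a flaw starts a new tree (a ``seed''), and the cost of exploring it later is prepaid from the $q(p)$-budget of the flaw that introduced it. For (ii): the standard peeling argument (delete a deepest leaf; conditioned on all that precedes that node's transition, the transition forced by $\Phi$ occurs with the stated probability; recurse) gives
\[
\Pr[\Phi\text{ is produced}]\ \le\ \prod_{v\in\Phi}\gamma\bigl(\mathrm{flaw}(v)\bigr),\qquad \gamma(f_i)\ =\ 2^{-\mathrm{Amenability}(f_i)+q_i(p)+2+h(p)} .
\]
The anatomy of $\gamma(f_i)$ is: the $-\Pot(f_i)$ inside $\mathrm{Amenability}(f_i)$ is the entropy injected each time $f_i$ is addressed, by~\eqref{eq:pot_def} — this is the compression gain; the $+\co_{\principal}^{f_i}$ is the number of bits to single out the principal predecessor among the at most $\Cong_{\principal}(f_i)$ candidates; the $+2$ is the per-node cost of the shape of the forest (at most $4^{n}$ shapes on $n$ nodes); the $+h(p)$ is the per-node share of encoding which steps were noise (send the count, then the subset); and $q_i(p)$ collects everything routed to the noise steps.

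Part (ii)'s handling of the noise is the heart of the matter and the main obstacle. A single noise step addressing $f_i$ can introduce up to $\Delta_i=|\Gamma_{\noise}(f_i)|$ new flaws, each costing about $\co_{\noise}$ bits to name, plus $\co_{\noise}^{f_i}$ bits for its predecessor; since $\Delta_i$ can be arbitrarily large, charging this to every appearance of $f_i$ would destroy the weights. The resolution is the amortization advertised in the abstract: the number of noise steps among the $t$ steps is a sum of independent indicators, so it exceeds $2pt$ only with probability exponentially small in $pt$ (which we add to the final bound); conditioned on it being at most $2pt$, we spend $h(p)+o(1)$ bits per step to say \emph{which} steps are noise, and we route the $\Delta_i(\co_{\noise}+O(1))$-bit cost of a noise step at an $f_i$-node entirely to that node. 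Averaged over all appearances of $f_i$ this contributes a quantity with leading term $p\,\Delta_i\,\co_{\noise}$ — precisely $q_i(p)$, the remaining constants ($5/2$, $h(p)$, $-2-h(p)$) being the exact bit-count for the noise predecessor, the node's own and its noise-children's noise-bits, and the possibility that a noise step introduces no flaw at all. The delicate part is running this bookkeeping while keeping the bound a \emph{clean product over $\Phi$}; this is exactly what forces the ``noise-caused flaw $=$ new tree'' convention and a careful attribution of each bit to a single node.

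For (iii): hypothesis~\eqref{eq:amenable} is precisely $\sum_{f_j\in\Gamma_{\principal}(f_i)}\gamma(f_j)<1$, with slack proportional to $1/R$. Writing $T_i$ for the total $\gamma$-weight of single-tree forests rooted at $f_i$ and expanding over its principal-child slots gives $T_i\le\gamma(f_i)\prod_{f_j\in\Gamma_{\principal}(f_i)}(1+T_j)$, whence the $T_i$ are uniformly bounded and, more to the point, the total $\gamma$-weight of $t$-node forests (with the at most $m$ roots drawn from $F$) is at most $2^{m}\cdot(\text{const})^{m}\cdot D^{t}$ for some $D=D(\text{slack})<1$. Finally, recovering the actual trajectory prefix from the forest costs a further $\log_2|\Omega|$ bits (for $\sigma_1$) and $m$ bits (for the flaws present at $\sigma_1$) — this is where the $\log_2|\Omega|+m$ enters, exactly as in the bookkeeping of~\cite{JACM}. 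Combining, the probability of not reaching a flawless state within $t$ steps is at most $(\text{a term exponentially small in }pt)+|\Omega|\cdot 2^{O(m)}\cdot D^{t}$, which, with $t=Rs(\log_2|\Omega|+m)$ and $R$ chosen large enough in terms of the slack in~\eqref{eq:amenable}, is below $\exp(-s)$ for every $s>1/2$. This proves the theorem.
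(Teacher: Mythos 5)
Your overall architecture --- attach a witness forest to each bad trajectory, bound the probability of producing a given forest by a product of per-node weights, then union-bound over forests via a branching-process count --- is a genuinely different route from the paper, which never bounds the probability of an individual witness structure. Unfortunately, your part (ii) contains a gap that I do not see how to repair within this architecture. The weight you assign each node is $2^{-\mathrm{Amenability}(f_i)+\cdots}$, whose dominant factor is $2^{-\Pot(f_i)}$, and $\Pot(f_i)=\min_{\sigma:\pi(\sigma)=f_i}H[\rho(\sigma,\cdot)]$ is a \emph{Shannon entropy}, i.e., an average of $-\log_2$ of transition probabilities. The peeling argument gives, for each node, the \emph{actual} probability of the specific transition dictated by $\Phi$, and this can exceed $2^{-H[\rho(\sigma,\cdot)]}$: if $\rho(\sigma,\cdot)$ puts mass $0.99$ and $0.01$ on two states, then $H\approx 0.08$, so $2^{-\Pot}\approx 0.945$, while the likely transition has probability $0.99$. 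Hence the claimed inequality $\Pr[\Phi\text{ is produced}]\le\prod_{v}\gamma(\mathrm{flaw}(v))$ is false forest-by-forest, and a union bound over forests cannot establish the theorem with $\mathrm{Amenability}$ as defined (it would prove a different theorem, with $\Pot$ replaced by a min-entropy, which coincides with the stated one only when the transition distributions are uniform, as in the special case of the Remark). The paper circumvents exactly this obstacle by running the whole argument at the level of entropies of random variables: it proves $H[P]\le\lambda x+M_0$ (Theorem~\ref{Amenability}) via a chain-rule decomposition over Break Sequences, where the inequality $H[\Sigma]\ge\sum_i\ex\,\pot(r_i)$ is an averaging statement that is valid for Shannon entropy, and then converts this to a probability bound through $H[P]\ge x\Pr[\Sigma\in B(x)]$ and iteration over blocks of depth $x_0$.

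Two secondary points. First, your amortization of the noise cost relies on the number of noise steps concentrating below $2pt$ ``with probability exponentially small in $pt$''; when $p$ is small relative to $1/t$ this failure probability is not small, whereas the paper's bound $H[N\mid Z]\le h(p)\cdot\ex Z$ needs no concentration. Second, your forest count in part (iii) allows only the at most $m$ initial roots, but your own convention makes every noise-caused flaw a new root (``seed''), and the number of such seeds over $t$ steps is not bounded by $m$; the paper's Break Sequence bookkeeping (Lemma~\ref{struct_code}) accounts for all introduced-and-later-addressed flaws, including noise-introduced ones, which is where the $(2+h(p))$ per-element surcharge and the $q_i(p)$ terms actually get paid.
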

\begin{remark}
In the noiseless case, i.e., when $p=0$, equation \eqref{eq:amenable} becomes an asymmetric LLL criterion. In particular, the main result of~\cite{JACM} is that if $\co_{\principal}^{f_i} = 0$ and all distributions $\rho_{\principal}(\sigma,\cdot)$ are uniform over their support $A_{\principal}(\sigma)$, then, a sufficient condition for reaching a flawless state quickly is that  for every $f_i \in F$, $\sum_{f_j \in \Gamma_{\principal}(f_i)} 1/a_j < 1/\mathrm{e}$, where $a_j = \min_{\sigma \in f_j: \pi(\sigma) = f_j} |A_{\principal}(\sigma)|$. We see that in this setting our condition~\eqref{eq:amenable} recovers this, up to the constant on the right hand side, i.e., $1/4$ vs. $1/\mathrm{e}$.
\end{remark}

\section{Related Work}\label{sec:related}

\subsection{POMDPs and the Reachability Problem} 

Markov Decision Processes (MDPs) are widely used models for describing problems in stochastic dynamical systems~\cite{CompetitiveMDP,Puterman,bertsekas2012}, where an agent repeatedly takes actions to achieve a specific goal while the environment reacts to these actions in a stochastic way. In an MDP the agent is assumed to be able to \emph{perfecty} observe the current state of the system and take action based on her observations. In a  \emph{partially} observable Markov Decision Process (POMDP) the agent only receives limited, and possibly inaccurate, information about the current state of the system. POMDPs have been used to model and analyze problems in artificial intelligence and machine learning such as reinforcement learning~\cite{chrisman1992reinforcement,RL}, planning under uncertainty~\cite{Planning}, etc. 

Formally, a  discrete POMDP is defined by the following primitives (all sets are assumed finite): (i) a \emph{state space} $\Omega$, (ii) a finite alphabet of \emph{actions}  $\cal A$, (iii) an observation space $\cal O$, (iv) an {\em action}-conditioned state transition model $\Pr(\tau | \sigma, a)$, where $\sigma,\tau \in \Omega$ and $a \in \cal A$, (v) an observation model $\Pr(o | \sigma)$, where $\sigma \in \Omega $ and $o \in \cal O$,  (vi) a cost function $\mathrm{c}: \Omega  \mapsto \mathbb{R}$ (or more generally a map from state-action pairs to the reals), and (vii) a desired criterion to minimize, e.g., expected total cumulative cost $\sum_{t=0}^\infty \ex \left[ \mathrm{c}(\sigma_t) \right]$, where $\sigma_t$ is the random variable that equals the $t$-th state of the trajectory of the agent. Finally, various choices of controllers are possible. For instance, a \emph{stochastic memoryless} controller is a map from the \emph{current} observation to a probability distribution over actions, whereas a \emph{belief-based} controller conditions its actions on probability distributions over the state space (i.e., beliefs) that are sequentially updated (using Bayes rule or some approximation of it) while the agent is interacting with the environment.

Unfortunately, the problem of computing an optimal policy for a POMDP, i.e., designing a controller that minimizes the expected cost, is highly intractable~\cite{PapaTsitsi,mundhenk2000complexity} and, in general, undecidable~\cite{madani1999undecidability}. Notably, the problem remains hard even if we severely restrict the class of controllers over which we optimize~\cite{PapaTsitsi, littman1998computational, etessami2010complexity, Nikos}. As far as we know, the only tractable case~\cite{Nikos} requires both the cost function and the class of controllers over which we optimize to be extremely restricted. In particular, the controller can not observe or remember anything and must apply the same distribution over actions in every step.

An important special case that has motivated our work is the \emph{reachability} problem for POMPDs. Here, one has a set of \emph{target} states $T \subseteq \Omega$, and the goal is to design a controller that starting from a distribution $\theta$ over $\Omega$, guides the agent to a state in $T$ (almost surely) with the optimal expected total cumulative cost. As shown in~\cite{Chater}, the problem is undecidable in the general case. In the same work, for the case where the costs are positive integers and the observation model is deterministic, i.e., the observations induce a partition of the state space, the authors give an algorithm which runs in time doubly-exponential in $|\Omega|$ and returns doubly-exponential lower and upper bounds for the optimal expected total cumulative cost, using a belief-based controller. On the other hand, our work establishes a sufficient condition for a stochastic memoryless controller to reach the target set $T$ rapidly (in time logarithmic in $|\Omega|$ and linear in $|F|$), in the case where each individual observation is binary valued (set membership) and the observation model is arbitrarily stochastic. To our knowledge, this is the first tractability result for a nontrivial class of POMDPs under stochastic memoryless controllers.

\subsection{Focusing and Prioritization}\label{sec:foc}

To achieve our results the controller must be focused and prioritize. The idea of focusing was introduced by Papadimitriou~\cite{papafocus} in the context of satisfiability algorithms, and amounts to ``if it ain't broken don't fix it", i.e., state evolution should only happen by changing the values of variables that participate in at least one violated constraint. One way to implement this idea is to always first select a violated constraint (flaw) and then take actions that tend to get rid of it. This has been an extremely successful idea in practice~\cite{walksat, circumspect} and it is often materialized by selecting a \emph{random} flaw to address in each step. We remark that our methods allow, in fact, also the analysis  of controllers that address a random flaw in each step, but for simplicity of exposition we chose to only present the case of a fixed permutation (prioritization).

Focusing is not only a good algorithmic idea, but also enables proofs of termination. Specifically, at the foundation of the argument of Moser and Tardos~\cite{MT} is the following observation: whenever an algorithm (focused or not) takes $t$ or more steps to reach a flawless state, say through flawed states $\sigma_1, \sigma_2, \ldots, \sigma_t$, there exists, by definition, a sequence of flaws $w_1, w_2, \ldots, w_t$ such that $\sigma_i \in w_i$. Therefore, by establishing a (potentially randomized) rule for selecting a flaw present in the state at each step, we can construct a random variable $W_t = w_1, w_2, \ldots, w_t$ to act as a \emph{witness} of the fact that the algorithm took at least $t$ steps. While, though, prima facie all constructions are equivalent, our capacity to bound the set of all possible such sequences is not. In particular, if the algorithm is focused and in each step we report the flaw on which the algorithm focused, then we can take advantage of the following observation: each appearance of a flaw $f_i$ in the witness sequence, with the potential exception of the very first, must be preceded by a distinct appearance of a flaw $f_j$ that causes $f_i$.  This allows us to bound the rate at which the entropy of the set of $t$-witness sequences grows with $t$. Of course, in a general setting, there is good reason to believe that prioritization, i.e., focusing on the flaw determined by a fixed permutation, will be not be the best one can do. In particular, observe that for the same $D_{\principal}$, different permutations $\pi$ give rise to different causality graphs. On the other hand, at the level of generality of this work, i.e., without any assumptions about the system at hand, we can not really hope for a more intelligent choice. 

\subsection{LLL algorithmization}

The Lov\'{a}sz Local Lemma (LLL)~\cite{LLL} is a non-constructive method for proving the \emph{existence} of flawless states that has served as a cornerstone of the probabilistic method. To use the LLL one provides a probability measure $\mu$ on $\Omega$, often the uniform measure, transforming flaws to (``bad") events, so that the existence of flawless states is equivalent to $\mu(\bigcup_{i=1}^m f_i) < 1$. The key quantity to control in order to prove this is negative dependence, i.e., the extent to which the probability of a bad event may be increased (boosted) by conditioning on the non-occurrence of other bad events. Roughly speaking, the LLL requires that for each bad event $f$, only a small number of other bad events should be able to boost $\mu(f)$ in this manner, whereas conditioning on the non-occurrence of all other bad events should not increase $\mu(f)$. Representing the boosting relationship in a graphical manner, with vertices corresponding to bad events pointing to their potential boosters, at a high level, the LLL requirement is that this digraph is sparse.

As one can imagine, whenever one proves that $\Omega$ contains flawless objects via the LLL it is natural to then ask if some such object can be found efficiently. Making the LLL constructive has been a long quest, starting with the work of Beck~\cite{beck_lll}, with subsequent works of Alon~\cite{alon_lll}, Molloy and Reed~\cite{mike_stoc}, Czumaj and Scheideler~\cite{Czumaj_lll}, Srinivasan~\cite{aravind_08} and others. Each of these works established a method for finding flawless objects efficiently, but with additional conditions relative to the LLL. A breakthrough was made by Moser~\cite{moser} who gave a very elegant algorithmization of the LLL for satisfiability via entropy compression. Very shortly afterwards, Moser and Tardos in a landmark paper~\cite{MT} made the LLL constructive for every product measure $\mu$. Specifically, they proved that if one starts by sampling an initial state according to $\mu$, and in every step selects an arbitrary occurring bad event and resamples its variables according to $\mu$, then with high probability a flawless state will be reached within $O(m)$ steps. 

Following~\cite{MT}, several works~\cite{szege_meet, SrinivasanPerm, AI, JACM, HV, Harmonic, Commu} have extended the scope of LLL algorithmization beyond product measures. In these works, unlike~\cite{MT}, one has to also provide either an explicit algorithm~\cite{szege_meet, SrinivasanPerm}, or an algorithmic framework~\cite{JACM,Harmonic,HV, Commu}, along with a way to capture the \emph{compatibility} between the algorithm's actions for addressing each flaw $f_i$ and the measure $\mu$.  
As was shown in~\cite{HV, Harmonic, Commu}, one can capture compatibility by  letting
\begin{align}\label{charge}
d_i   =  \max_{\tau \in \Omega } \frac{\nu_i(\tau)}{ \mu(\tau) } \ge 1 \enspace ,
\end{align} 
where $\nu_i(\tau)$ is the probability of ending up at state $\tau$ at the end of the following experiment: sample $\sigma \in f_i$ according to $\mu$ and address flaw $f_i$ at $\sigma$. 
An algorithm achieving $d_i = 1$ is a \emph{resampling oracle} for flaw $f_i$. If $d_i =1$ for every $i \in [m]$, then it was proven in~\cite{HV} that the causality digraph equals the boosting digraph mentioned above and the condition for success is identical to that of the LLL (observe that the resampling algorithm of Moser and Tardos~\cite{MT} is trivially a resampling oracle for every flaw). More generally, ascribing to each flaw $f_i$ the \emph{charge} $\gamma(f_i) =  d_i \cdot \mu(f_i)$, yields the following user-friendly algorithmization condition~\cite{Harmonic}, akin to the asymmetric Local Lemma: if for every flaw $f_i \in F$,
\begin{align} \label{LLL}
\sum_{ f_ j \in \Gamma(f_i)  } \gamma(f_j)  < \frac{1}{4} \enspace , 
\end{align}
then with high probability the algorithm will reach a sink after $O( \log | \Omega | + m )$ steps.  

Even though the noiseless case is only tangential to the main point of this work, as an indication of the sharpness of our analysis, we point out that in the noiseless case, our condition~\eqref{eq:amenable} is identical to~\eqref{LLL} with $\gamma(f_i)$ replaced by $\chi(f_i) :=   2^{-\Pot( f_i )  + b_{\principal}^{f_i}}$. In general, $\gamma(f_i)$ and $\chi(f_i)$ are incomparable. Roughly speaking, settings where $b_{\principal}^{f_i}$ is small and $d_i$ is large favor $\chi(f_i)$ over $\gamma(f_i)$ and vice versa, while the two meet when $b_{\principal}^{f_i}=0$, $\mu$ is uniform, and the transition probabilities are uniform,  as in~\cite{JACM}.

In terms of  techniques, as hinted in Section~\ref{sec:foc}, proofs of LLL algorithmizations consist of two independent parts. In one part, one bounds from above the probability of any witness sequence occurring, or in the case of Moser's entropic argument, bounds from below the entropy injected to the system while addressing the sequence. In the other part, one has to estimate the [entropy of the] set of possible witness sequences, using syntactic properties considerations mandated by causality: roughly speaking every occurrence of a flaw in a witness sequence, with the potential exception of the very first, must be preceded by an occurrence of some flaw that causes it. Finally, one compares the rate at which the probability of a $t$-step witness sequence decreases (or the rate at which entropy is increased) with the rate at which the [entropy of the] set of possible witness sequences increases, to establish that their product tends to 0 with $t$.
 
In this paper, exactly because we aim to capture the intensity of interactions between flaws under adversarial noise, we need to take a different approach. In particular, our proof can be thought of as entangling the two parts described above in order to establish that, while adversarial noise can make the imposed syntactic requirements inherited by the causality graph very weak (by making the graph extremely dense), the fact that the intensity of the noise is low, suffices to control the growth rate of the entropy of the set of witness sequences. The result is a carefully tuned argument that amortizes the entropy injected into the system against its effect on the entropy of the set of Break Forests. Key to the capacity to perform this amortization is the use of so-called Break Forests, introduced in~\cite{AI}, which localize in time the introduction of new flaws in the state. This property of Break Forests was not used in earlier works~\cite{AI,Harmonic} and allows us to use a different amortization for the flaws introduced by the principal vs.\ those introduced by noise.

\section{Termination via Compression}

Our analysis will not depend in any way on the distribution $\theta$ of the initial state. As a result, without loss of generality, we can assume that the process starts at an arbitrary but fixed state $\sigma_{\mathrm{init}}$. We let $A(\sigma)$ denote the support of $\rho(\sigma, \cdot)$, i.e., $A(\sigma)$ is the set of all states reachable by the process in a single step from $\sigma$.

\begin{definition}
We refer to the (random) sequence $\sigma_{\mathrm{init}} = \sigma_1,\ldots,\sigma_{t+1}$, entailing the first $t$ steps of the process, as the \emph{$t$-trajectory}. A $t$-trajectory is \emph{bad} iff $\sigma_1,\ldots,\sigma_{t+1}$ are all flawed.
\end{definition}

We model the set of all possible trajectories as an infinite tree whose root is labelled by $\sigma_1 = \sigma_{ \mathrm{init} }$.  The root has $|A(\sigma_{\mathrm{init}})|$ children corresponding to (and labelled by) each possible value of $\sigma_2$. More generally, a vertex labeled by $\sigma$ has $|A(\sigma)|$ children, each child labeled by a distinct element of $A(\sigma)$, i.e., a distinct possible value of $\sigma_{i+1}$. Every edge of this infinite vertex-labelled tree is oriented away from the root and labelled by the probability of the corresponding transition, i.e., $\rho(\sigma,\tau)$, where $\sigma$ is the parent and $\tau$ is the child vertex.  By our assumption, every such edge label is at least $2^{-B}$.

We call the above labelled infinite tree the \emph{process tree} and note that it is nothing but the unfolding of the Markov chain corresponding to the state-evolution of the process. In particular, for every vertex $v$ of the tree, the probability, $p_v$, that an infinite trajectory will go through $v$ equals the product of the edge-labels on the root-to-$v$ path. In visualizing the process tree it will be helpful to draw each vertex $v$ at Euclidean distance $-\log_2 p_v$ from the root. This way all trajectories whose last vertex is at the same distance from the root are equiprobable, even though they may entail wildly different numbers of steps (this also means that sibling vertices are not necessarily equidistant from the root). Finally, we color the vertices of the process tree as follows. For every infinite path that starts at the root determine its maximal prefix forming a bad trajectory. Color the vertices of the prefix red and the remaining vertices of the path blue.

In terms of the above picture, our goal will be to prove that there exist a critical radius $x_0$ and $\delta >0$, such that the proportion of red states at distance $x_0$ from the root is at most $1-\delta$. Crucially, $x_0$ will be polynomial, in fact linear, in $m=|F|$ and $ \log_2 | \Omega | $. Since we will prove this for every possible initial state and the process is Markovian, it follows that the probability that the process reaches distance $x$ from the root while going only through red states is at most $(1-\delta)^{\lfloor x/x_0 \rfloor}$. 

To prove that red vertices thin out as we move away from the root we stratify the process tree as follows. Fix any real number $x>0$ and on each infinite path from the root mark the first vertex of probability $2^{-x}$ or less, i.e., the first vertex that has distance at least $x$ from the root. Truncate the process tree so  that the marked vertices become  leaves of a finite tree. Let $L(x)$ be the set of all root-to-leaf paths (trajectories)  in this finite tree  and let $B(x) \subseteq L(x)$ consist of the bad trajectories. Now, let $I$ be the random variable equal to an infinite trajectory of the process and let $\Sigma = \Sigma(x)$ be the random variable equal to the prefix of $I$ that lies in $L(x)$. By definition, $\sum_{\ell \in L(x)} \Pr[\Sigma = \ell] = 1$, while $\Pr[\ell] \in (2^{-x-B},2^{-x}]$ for every $\ell \in L(x)$, since $-\log_2 \rho \ge B$. Let $P = P(\Sigma)$ be the maximal red prefix of $\Sigma$ and observe that if $\Sigma \in B(x)$ then $P = \Sigma$. Therefore,
\begin{equation}\label{beauty}
H[P] \; \ge \sum_{\ell \in B(x)} \Pr[\Sigma = \ell] (-\log_2 \Pr [ \Sigma =\ell]) \; \ge \; x  \sum_{\ell \in B(x)} \Pr[\Sigma = \ell] \; = \; x \Pr[\Sigma \in B(x)] \enspace .
\end{equation}
Assume now that there exist $M_0 > 0$ and $\lambda < 1$, such that $H[P] \le \lambda  x + M_0$, for every $x>0$. Then~\eqref{beauty} implies that for $x_0 = 2 M_0/ (1 - \lambda )$, 
\begin{equation}\label{eq:posi}
\Pr[\Sigma \in B(x_0)] \le \frac{H[P]}{x_0} \le \frac{ \lambda x_0 + M_0}{x_0} = \lambda+ \frac{1- \lambda}{2} = \frac{1 + \lambda}{2} < 1 \enspace .
\end{equation}

If $\Sigma \in B(x_0)$,  we treat the reached state as the root of a new finite tree and repeat the same analysis, as it is independent of the starting state. It follows in this manner that for every integer $T \ge 1$, the probability that the process reaches a state at distance $T(x_0 + B)$ or more from the root by going only through red states is at most $\left((1+\lambda)/2\right)^T$. Thus, for any $s > 1/2$, the probability that the process reaches a state at distance
\[
E = \left\lceil\frac{2s}{1+\lambda} \right\rceil (x_0 + B) = O\left(\frac{s M_0}{1-\lambda^2} \right)
\]
or more from the root by going only through red states is at most $\left((1+\lambda)/2\right)^{\left\lceil \frac{2s} {1 +\lambda}\right\rceil} < \exp(-s)$. 

Since $\rho(\sigma,\tau) < 1 - 2^{- B}$, it follows that $ - \log_2 \rho(\sigma,\tau) 
>  2^{-B}$, for every arc in $D$. Thus, after $ 2^{B} E $ steps the process is always at distance $E$ or more from the root. Thus, the probability of not reaching a flawless state after $ 2^{B} E=  O\left(\frac{s M_0}{1-\lambda^2} \right)$ steps is $\exp(-s)$. Therefore Theorem~\ref{Molloy} follows from the following.

%
%
%
 
\begin{theorem}\label{Amenability}
Let $\Xi = \max \{  \co_{\noise },  \co_{\principal}  \}$ and $\Delta = \max_{j \in F} \Delta_j$. If there exists $\lambda < 1$ such that for all $j \in [m]$,
\[
\sum_{f_i \in \Gamma_{\principal}(f_j)}   2^{- (\lambda \Pot(f_i)  -   \co_{\principal}^{f_i} - q_i(p) )}  < 2^{-(2+h(p))} \enspace ,
\]
then $H[P] \le \lambda x + M_0$ for every $x>0$, where $M_0   = \log_2 |\Omega |  + m  (\Delta+1)(\Xi+4) +  \lambda B$.
\end{theorem}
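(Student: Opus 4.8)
The plan is to bound $H[P]$ by encoding each red (bad) prefix trajectory $\ell$ losslessly, so that $\mathbb{E}[\text{code length}] \ge H[P]$ and every codeword has length at most $\lambda x + M_0$. The idea is classical entropy compression, but with the causality/priority structure governing the bookkeeping. Fix $\ell \in B(x)$ of length $t$, i.e. $\ell = \sigma_1, \dots, \sigma_{t+1}$ all flawed, with $-\log_2 \Pr[\Sigma = \ell] \in (x, x+B]$. Along the way the process addresses the flaw $\pi(\sigma_i) = f_{w_i}$ at step $i$, producing the witness sequence $w = w_1 \cdots w_t$. I would split each step's transition into its ``principal'' part and its ``noise'' part: writing out $\ell$ costs $\sum_i -\log_2 \rho(\sigma_i, \sigma_{i+1})$ bits of entropy, and I want to trade this against (i) a description of the Break Forest of $w$ (which localizes in time where each new flaw was introduced, following \cite{AI}), (ii) a correction term paid only when a step follows $D_{\noise}$ rather than $D_{\principal}$, and (iii) the final state $\sigma_{t+1}$, costing $\log_2 |\Omega|$. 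The governing inequality in Theorem~\ref{Amenability} is exactly what makes the Break-Forest description cheap enough.

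Concretely, I would recover $\ell$ from: the final state $\sigma_{t+1}$; the Break Forest $\Phi(w)$; and, for each step $i$, a ``within-bundle'' index that, given $\sigma_{i+1}$ and the label $w_i$ and the congestion data, pins down $\sigma_i$ by walking \emph{backwards} in the process tree. Walking backwards is where $\co_{\principal}^{f_i}$ and $\co_{\noise}^{f_i}$ enter: given $\sigma_{i+1}$ and that flaw $f_{w_i}$ was addressed at $\sigma_i$, there are at most $\Cong_{\principal}(f_{w_i})$ (resp.\ $\Cong_{\noise}(f_{w_i})$) candidate predecessors, so $\co_{\principal}^{f_{w_i}}$ (resp.\ $\co_{\noise}^{f_{w_i}}$) bits suffice to identify the predecessor, and an extra bit flags which of the two regimes the step was in. The entropy accounting is then: $H[P] = \mathbb{E}[-\log_2 \Pr[\Sigma=\ell] \mid \Sigma \in B(x)]\Pr[\Sigma\in B(x)]$, and I bound $-\log_2\Pr[\Sigma=\ell]$ by the total code length, which decomposes as $\log_2|\Omega|$ (for $\sigma_{t+1}$) plus the cost of $\Phi(w)$ plus $\sum_i (\co^{f_{w_i}} + 1)$ for the backward steps plus a $h(p)$-type term per step for the regime flag in expectation. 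The crucial point is that the number of times flaw $f_j$ appears in $w$ is, by causality and the Break-Forest property, controlled by the number of times flaws in $\Gamma_{\principal}(f_j)$ (for principal-introduced occurrences) or in $\Gamma_{\noise}(f_j)$ (for noise-introduced occurrences) appear before it; this is what converts the per-flaw inequality $\sum_{f_i \in \Gamma_{\principal}(f_j)} 2^{-(\lambda\pot(f_i) - \co_{\principal}^{f_i} - q_i(p))} < 2^{-(2+h(p))}$ into a global bound $\lambda x + m(\Delta+1)(\Xi+4) + \lambda B$ on the code length, with the ``$+1$'' in $(\Delta+1)$ absorbing the root-of-each-tree (the first, unpreceded occurrence of each flaw) and the $(\Xi+4)$ factor collecting per-appearance overhead.

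The main obstacle — and the technical heart — will be the amortization that forces every codeword length below $\lambda x + M_0$ \emph{deterministically} (or in expectation with the right concentration), given that $x = -\log_2 \Pr[\Sigma = \ell] - O(1) = \sum_i -\log_2 \rho(\sigma_i,\sigma_{i+1}) - O(1)$ mixes principal entropy and noise entropy in a way the adversary controls. The point of the definition $q_i(p) = p(\Delta_i(\co_{\noise} + \tfrac52 + h(p)) - 2 - h(p))$ is to pre-pay, per unit of $\pot(f_i)$ actually injected, for the worst-case cost of the noise-induced flaws and their backward-identification bits; I would verify that on any step addressing $f_i$, the entropy $-\log_2\rho(\sigma_i,\sigma_{i+1})$ available to ``spend'' is at least $\pot(f_i)$ in expectation, that a $(1-\lambda)$ fraction of it is banked, and that the banked entropy over all steps with label in $\Gamma_{\principal}(f_j)$ dominates the description cost of the sub-forest hanging below each occurrence of $f_j$ — this is precisely the union-bound-free ``ratio'' argument that Theorem~\ref{Amenability}'s hypothesis sets up. I would carry it out by first defining the encoding map formally, then proving injectivity (the decoder), then bounding the image size by a generating-function / recursive-counting argument over Break Forests weighted by $2^{-(\lambda\pot - \co_{\principal} - q(p))}$, and finally reading off $M_0$ from the non-$x$-proportional terms. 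The bookkeeping separating $\co_{\principal}$ from $\co_{\noise}$ and principal-children from noise-children in the Break Forest, which \cite{AI,Harmonic} did not need, is where I expect the most care to be required.
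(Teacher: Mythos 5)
Your proposal is essentially the paper's own argument: encode the bad prefix by the final state ($\log_2|\Omega|$), the Break Sequence/Forest, per-step backward indices costing the congestion logarithms $\co_{\principal}^{f_{w_i}}$ or $\co_{\noise}^{f_{w_i}}$, and per-step principal/noise flags, then amortize the description cost against the potential injected at each addressed flaw, using the per-flaw amenability inequality (via a subset-product/log-sum bound over $\Gamma_{\principal}(f_j)$) to control the Break-Forest cost. One caution: the pointwise claim that \emph{every} codeword has length at most $\lambda x + M_0$ cannot hold — a trajectory can consist of many low-self-information steps while $-\log_2\Pr[\Sigma=\ell]$ stays near $x$, and a single step's backward cost can be $\co_{\noise}$ rather than $p\,\co_{\noise}$ — so the accounting must be carried out in expectation exactly as you hedge; the paper does this with no concentration argument at all, purely via the chain rule for Shannon entropy ($H[P]\le H[B_0^*,L]+H[N\mid Z]+H[Y_1\mid\cdots]+\log_2|\Omega|+\sum_i H[\sigma'_{i-1}\mid\sigma'_i,\ldots]$) together with the key amortization $\sum_i \ex\,\pot(r_i) \le H[\Sigma] \le x+B$.
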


\section{Break Sequences}

Recall that $\pi$ is an arbitrary but fixed ordering of the set of flaws $F$ and that the highest flaw present in each state $\sigma$ is denoted by $\pi(\sigma)$. We will refer to $\pi(\sigma)$ as the flaw \emph{addressed} at state $\sigma$, i.e., as in the noiseless case, even though  the action distribution $P(O(\sigma))$  may be ``misguided" whenever $O(\sigma) \neq U(\sigma)$.

\begin{definition}
Given a bad $t$-trajectory $\Sigma$, its \emph{witness} sequence is $W(\Sigma) = w_1, \ldots, w_t = \{\pi(\sigma_i)\}_{i=1}^t$.
\end{definition}

To prove Theorem~\ref{Amenability}, i.e., to gain control of bad trajectories and thus of $H[P]$, we introduce the notion of \emph{break sequences} (see also~\cite{AI,Harmonic}). Recall that $U(\sigma)$ denotes the set of flaws present in $\sigma$.
\begin{definition}
Let  $B_0 = U(\sigma_1)$. For $1 \le i\le t-1$, let $B_i = U(\sigma_{i+1}) \setminus ( U (\sigma_i) \setminus w_i )$.
\end{definition}
Thus, $B_i$ is the set of flaws ``introduced" during the $i$-th step, where if a flaw is addressed in a step but remains present in the resulting state we say that it ``introduced itself". Each flaw $f \in B_i$ may or may not be addressed during the rest of the trajectory. For example, $f$ may get fixed ``collaterally" during some step taken to address some other flaw, before the controller had a chance to address it. Alternatively, it may be that $f$ remains present throughout the rest of the trajectory, but in each step $i < j \le t-1$ some other flaw has greater priority than $f$. It will be crucial to identify and focus on the subset of flaws $B_i^* \subseteq B_i$ that \emph{do} get addressed during the $t$-trajectory, causing entropy to enter the system. Per the formal Definition~\ref{def:bs} below, the set of such flaws is $B_i^*  = B_i \setminus \{O_i \cup N_i \}$, where $O_i$ comprises any flaws in $B_i$ that get eradicated collaterally, while $N_i$ comprises any flaws in $B_i$ that remain present in every subsequent state after their introduction without being addressed. 
\begin{definition}\label{def:bs}
The \emph{Break Sequence} of a $t$-trajectory is $B_0^*, B_1^*, \ldots, B_{t}^*$, where for $0 \le i \le t$,
\begin{align*}
B_i^*  	& = B_i \setminus \{O_i \cup N_i \} \enspace , \text{where}\\
O_i 		& =  	\{f \in B_i \mid  \exists  j \in [i+1,t] : 
f \notin U(\sigma_{j+1})  \wedge  \forall \ell \in [i+1,j]:   f \ne w_{\ell} \} \enspace , \\
N_i 		& =  	\{f \in B_i \mid \forall j \in [i+1, t] :  
f \in 	 U(\sigma_{j+1})  \wedge  \forall \ell \in [i+1,t]:   f \ne w_{\ell} \} \enspace .\
\end{align*}
\end{definition}

Given $B_0^*,B_1^*,\ldots,B_{i-1}^*$ we can determine the sequence $w_1, w_2, \ldots, w_i$ of flaws addressed inductively, as follows. Define $E_1 = B_0^*$, while for $i \ge 1$, let
\begin{equation}\label{eq:ri}
E_{i+1} = (E_{i} - w_i) \cup B_i^*  \enspace .
\end{equation}
Observe that, by construction, $E_i \subseteq U(\sigma_i)$ and $w_i \in E_i$. Therefore, for every $i$, the highest flaw in $E_i$ is $w_i$.

\section{Proof of Theorem~\ref{Amenability}}

For the analysis, we will assume that the state-transition distribution $\rho$ is realized in each step by flipping a coin with bias $p$ to determine if the state transition will occur according to $\rho_{\principal}(\sigma, \cdot)$ or $\rho_{\noise}(\sigma, \cdot)$. Let $I$ be the random variable equal to an infinite trajectory of the algorithm. For any fixed real number $x>0$, we define the following random variables. 
\begin{itemize}
\item
Let $\Sigma$ be the prefix of $I$ in $L(x)$.
\item
Let $P = \sigma_1, \sigma_2,  \ldots, \sigma_{Z+1}$ be the maximal bad prefix of $\Sigma$. Thus, $P$ consists of $Z$ steps.
\item
For $ i  \ge  1$:
\begin{itemize}
\item 
Let $\sigma'_i = \sigma_i$ for $i \le Z$, while $\sigma'_i = \emptyset$ for $i > Z$.
\item
Let $r_i = w_i(\Sigma)$ for $i \le Z$, while $r_i = \emptyset$ for $i > Z$.
\item 
Let $n_i$ be indicator r.v.\ that $\rho_{\noise}$ was employed in the $i$-th step of $P$, while $n_i = 0$ for $i  > Z$. 
\end{itemize}

\item Let $N 
= n_1, n_2, \ldots$ 

\item
Let $Y = Y(P) = B_0^*, B_1^*,\ldots,$ be the break sequence of $P$, where $B^*_i = \emptyset$ for $i  > Z$.
\item
Let $Y_1$ be the suffix of $Y$ starting at $B_1^*$.
\item
Let $L = |B_0^*|, |B_1^*|, \ldots$ 
\end{itemize}

Observe that $L$ determines $Z$ since $\sum_{i=0}^t |B_i^*| \ge t$ for all $t \le Z$, with equality holding only for $t = Z$. To pass from~\eqref{full_seq} to~\eqref{pointwise} we use that $L$ determines $Z$ and that $Y$ determines the sequence $r_1,r_2,\ldots, r_{Z}$. To pass from~\eqref{verylong} to~\eqref{poules} we use that there is a 1-to-1 correspondence between the elements of the witness sequence $r_1, \ldots, r_{Z}$ and the $Z$ elements in the sets $B_0^*, B_1^*, \ldots$ Thus,
\begin{align}
H[P] & =   H[Z, \sigma_1, \sigma_2,  \ldots, \sigma_{Z+1} ]  		\nonumber  				\\
	& \le H[L, \sigma_1,  \sigma_{2}, \ldots, \sigma_{Z+1}, Y , N] \nonumber   				\\
	& \le H[B_0^*, L] + H[N \mid L] +  H[Y_1 \mid B_0^*, L, N] + H[\sigma_{Z+1}] 
			+ \sum_{i  \ge 2 } H[\sigma'_{i-1} \mid \sigma'_i, Y, N, L] 		\label{full_seq} \\
	& \le H[B_0^*, L] + H[N \mid Z] +  H[Y_1 \mid B_0^*, L, N] + \log_2 |\Omega|
			+ \sum_{i  \ge 2 } H[\sigma'_{i-1} \mid \sigma'_i, r_{i-1},n_{i-1} , Z] \label{pointwise} \\
	& \le H[B_0^*, L] + h(p) \cdot \ex Z + H[Y_1 \mid B_0^*, L, N] + \log_2 |\Omega| 
			+ \sum_{i \ge 2 } \ex \left[ (1-p)  \co_{\principal}^{r_{i-1}} + p  \co_{\noise}^{r_{i-1}} \right]\label{verylong} \\
	&  \le   H[B_0^*, L] + h(p) \cdot \ex Z + H[Y_1 \mid B_0^*, L, N] + \log_2 |\Omega| 
			+ \sum_{i  \ge 0} \ex \mathrm{In}(B_i^*)	\label{poules}	 \enspace ,
\end{align}
where, recalling that $\co_{\noise} = \max_{j \in [m]} \co_{\noise}^{f_j}$, we define for an arbitrary set of flaws $S$, 

\begin{eqnarray}
\mathrm{In}(S) & = &  (1-p)  \sum_{f \in S }  \co_{\principal}^{f}  + p  |S|  \co_{\noise} \nonumber \\
& := & (1-p)  \mathrm{In}_{\principal}(S)	+ p \mathrm{In}_{\noise}(S)	 \label{eq:indef} \enspace .
\end{eqnarray}

To bound the right hand side of~\eqref{poules} we prove at the end of this section Lemmata~\ref{struct_code} and~\ref{meatless} presented below. In the rest of this section, all sums over index $i$ are sums over $i \ge 1$. 
\begin{lemma}\label{struct_code}
$H[B_0^*,L] \le m + 2\ex Z - \ex |B_0^*|$, and $\sum_i \ex |B_i^*| =  \ex Z   - \ex |B_0^*|$.
\end{lemma}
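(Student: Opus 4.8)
The plan is to bound the joint entropy $H[B_0^*, L]$ by an explicit encoding argument and to verify the identity $\sum_i \ex|B_i^*| = \ex Z - \ex|B_0^*|$ by a conservation-of-flaws counting argument. For the identity, I would use the inductive relation $E_{i+1} = (E_i \setminus \{w_i\}) \cup B_i^*$ established in~\eqref{eq:ri}, together with the facts that $E_1 = B_0^*$, that $w_i \in E_i$ for every $i \le Z$, and that the process has terminated (i.e.\ every flaw that is ever addressed has been addressed) by step $Z$, so that $E_{Z+1} = \emptyset$. Tracking cardinalities, $|E_{i+1}| = |E_i| - 1 + |B_i^*|$ for each $i \le Z$, since $w_i$ is removed and the $|B_i^*|$ newly introduced-and-addressed flaws are added (and these are disjoint from $E_i \setminus \{w_i\}$ because a flaw in $B_i^*$ is, by definition, absent from $U(\sigma_i)\setminus\{w_i\} \supseteq E_i \setminus \{w_i\}$ after the bookkeeping). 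Telescoping from $i=1$ to $i=Z$ gives $0 = |E_{Z+1}| = |B_0^*| - Z + \sum_{i=1}^{Z} |B_i^*|$, i.e.\ $\sum_{i\ge 1}|B_i^*| = Z - |B_0^*|$ pointwise; taking expectations yields the claim. The one point requiring care is confirming that nothing in $B_i^*$ is double-counted and that $E_{Z+1}$ is genuinely empty, which is exactly where the definitions of $O_i$ and $N_i$ (flaws removed collaterally, or never addressed) are used — those are precisely the flaws excised so that $B_i^*$ contains only flaws that contribute a distinct future addressing step.

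For the entropy bound $H[B_0^*, L] \le m + 2\ex Z - \ex|B_0^*|$, I would exhibit a prefix-free code. Split as $H[B_0^*, L] \le H[B_0^*] + H[L]$. The first term: $B_0^* \subseteq U(\sigma_1) \subseteq F$, so it is a subset of an $m$-element set and one can encode it with $m$ bits; a sharper $m$-bit bound comes from noting $H[B_0^*] \le \sum_{f\in F}h(\Pr[f\in B_0^*]) \le m$, but it will be convenient to instead charge $|B_0^*|$ bits to name the elements and fold the rest in — the cleanest route is: encode $B_0^*$ as a subset of $F$ using $m$ bits flat, contributing the $m$ term. The second term $H[L]$: $L = |B_0^*|, |B_1^*|, \ldots$ is a sequence of nonnegative integers that terminates (all later entries are $0$), and by the identity just proved its total "mass" beyond $|B_0^*|$ is exactly $Z$. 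A standard unary-style or Elias-type encoding of a sequence of nonnegative integers $a_0, a_1, \ldots$ with $\sum a_i = s$ and at most $s+1$ nonzero-relevant positions costs at most roughly $2s + O(1)$ bits; here, encoding each $|B_i^*|$ for $i \le Z$ in unary (a block of $|B_i^*|$ ones followed by a zero delimiter) uses $\sum_{i=1}^{Z}(|B_i^*|+1) = (Z - |B_0^*|) + Z = 2Z - |B_0^*|$ bits (using the identity), plus we already know $Z$ from the termination condition $\sum_{i\le t}|B_i^*|\ge t$ with equality iff $t=Z$, so no extra delimiter for the end is needed. Taking expectations gives $H[L] \le 2\ex Z - \ex|B_0^*|$, and adding the $m$ from $B_0^*$ completes the bound.

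The main obstacle I anticipate is not the encoding itself but making the conservation identity fully rigorous: one must check that every flaw addressed during the trajectory is "born" in exactly one $B_i^*$ (so the correspondence between addressing steps and elements of $\bigcup_i B_i^*$ is genuinely a bijection), which requires carefully ruling out, via the definitions of $O_i$ and $N_i$, both the possibility that an addressed flaw was silently dropped from the $E$-bookkeeping and the possibility that a flaw in some $B_i^*$ is never in fact addressed. This is a purely syntactic check on Definition~\ref{def:bs} and~\eqref{eq:ri}, but it is the load-bearing step; once it is in place, the entropy estimate is a routine prefix-free-coding computation and the expectation bound follows by linearity. A secondary subtlety is ensuring the unary code is self-delimiting without paying an extra additive term — handled by the observation that $Z$ is determined by $L$ via the strict-inequality/equality dichotomy $\sum_{i\le t}|B_i^*| \ge t$, so the decoder knows when to stop.
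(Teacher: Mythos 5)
Your proposal is correct and follows essentially the same route as the paper: encode $B_0^*$ by its $m$-bit characteristic vector, encode $L$ in self-delimiting unary blocks $1^{|B_i^*|}0$ (with termination detected via the first time $|B_0^*|+\sum_{i\le j}|B_i^*|=j$), and derive the identity $\sum_{i\ge 1}|B_i^*|=Z-|B_0^*|$ from the bijection between addressing steps and elements of the $B_i^*$'s. If anything, your telescoping of the recursion $E_{i+1}=(E_i\setminus\{w_i\})\cup B_i^*$ spells out the counting identity more explicitly than the paper does, which simply asserts it ``by construction.''
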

Using the chain rule for entropy to write $H[Y_1 \mid B_0^*, L, N]  = \sum_{i} H[B_i^* \mid B_0^*, \ldots, B_{i-1}^*, L, N]$ and combining the inequality in Lemma~\ref{struct_code} with~\eqref{poules}, we see that $H(P)$ is bounded from above by
\[
m + 2 \ex Z - \ex |B_0^*| + h(p) \cdot \ex Z + \sum_{i} H[B_i^* \mid B_0^*, \ldots, B_{i-1}^*, L, N] + \log_2 |\Omega| + \ex \mathrm{In}(B_0^*)  + \sum_i \ex \mathrm{In}(B_i^*)	
\]
Using the equality in Lemma~\ref{struct_code} to express $\ex Z$ as a sum, we see that the line above is equal to
\begin{align*}
m + & (1+h(p))\ex |B_0^*|  + \log_2 |\Omega| + \ex \mathrm{In}(B_0^*)  \\
 + &\sum_{i} 
	\left\{
		H[B_i^* \mid B_0^*,\ldots, B_{i-1}^*, L, N] + 
		\ex
			\left[
				\mathrm{In}(B_i^*) + (2+h(p)) |B_i^*|
			\right]
	\right\}	
	\enspace 
	.
\end{align*}

For an arbitrary set of flaws $S$, we define 
\begin{eqnarray*}
q(S) & = & \sum_{ f_j \in S}  q_j(p) \enspace , \\
g(S) & = & \lambda^{-1} (p( 2 + h(p)) |S| + q(S)) \enspace .
\end{eqnarray*}

\begin{definition}
Say that $C_{\principal}$ is $\lambda$-amenable if the conditions of Theorem~\ref{Amenability} are satisfied.
\end{definition}

\begin{lemma}\label{meatless}
If $C_{\principal}$ is $\lambda$-amenable, then 
\begin{align*}
\sum_i \{ H[B_i^* \mid B_0^*,\ldots,B_{i-1}^*, L, N] +\ex \left[\mathrm{In}(B_i^*) + (2+h(p)) |B_i^*|   \right] \} \le \lambda  (x+B + \ex g(B_0^*)) \enspace.
 \end{align*}
\end{lemma}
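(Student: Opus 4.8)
## Proof Proposal for Lemma~\ref{meatless}

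The plan is to bound the sum term-by-term, reducing the whole inequality to a single "local" condition at each flaw that is then exactly the $\lambda$-amenability hypothesis. The key conceptual point is that the conditional entropy $H[B_i^* \mid B_0^*,\ldots,B_{i-1}^*,L,N]$ must be controlled using \emph{causality}: every flaw that appears in $B_i^*$ is, by the definition of the break sequence, introduced during step $i$, and step $i$ addresses the flaw $w_i$, which is the highest-priority flaw in $E_i$ (a set determined by $B_0^*,\ldots,B_{i-1}^*$). Hence, conditioned on the prefix of the break sequence and on $L$ (which pins down $w_i$ and whether step $i$ is a noise step via $N$), each flaw in $B_i^*$ must lie in $\Gamma_{\principal}(w_i)$ if the step was principal, or in $\Gamma_{\noise}(w_i)$ if the step was a noise step. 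This is the place where the soft sparsity enters: on principal steps the neighborhood is the genuinely sparse $\Gamma_{\principal}$, while on noise steps we pay the larger $\Delta_i$ but only with probability $p$.

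First I would set up a Shearer/KL-type entropy bound: for a random subset of a fixed "universe" $\Gamma$, the entropy is at most $\sum$ over elements of the universe of the binary entropy of its inclusion probability, and more sharply, when we also get to charge each included element a weight, $H[B_i^*\mid\cdots]$ is bounded by the log of a generating-function-style quantity $\prod_{f\in\Gamma(w_i)}(1 + 2^{\text{weight}(f)})$ minus the expected collected weight — i.e. I would use the standard inequality that for any weights $\alpha_f$, $H[S] + \ex\sum_{f\in S}\alpha_f \le \log_2 \prod_f (1 + 2^{\alpha_f})$. Here the natural choice of weight for $f$ is $\lambda\,\Pot(f) - \co_{\principal}^{f} - q_f(p)$ on the principal branch (matching the exponent in the hypothesis), so that $\sum_f 2^{-\text{weight}(f)} < 2^{-(2+h(p))}$ gives $\log_2\prod_f(1+2^{-\text{weight}(f)}) \approx \sum_f 2^{-\text{weight}(f)} < 2^{-(2+h(p))}$, which is tiny. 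The terms $\mathrm{In}(B_i^*) + (2+h(p))|B_i^*|$ on the left are precisely what these weights are designed to absorb: $\mathrm{In}_{\principal}$ contributes the $\co_{\principal}^{f}$, the $q_f(p)$ absorbs the noise-congestion $\mathrm{In}_{\noise}$ term and the $h(p)$ overhead from conditioning on $N$, and the leftover $\lambda\,\Pot(f)$ is exactly the entropy that was \emph{injected} when $f$ was later addressed — which is why the bound can only be closed after summing telescopically over $i$ against the $\Pot$ budget.

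The heart of the argument, and the step I would carry out most carefully, is the telescoping/amortization: I would introduce the potential function on the running set $E_i$, namely something like $\Phi_i = \lambda\,\Pot(E_i) + \lambda\cdot(\text{distance-so-far})$, and show that one step of~\eqref{eq:ri} changes $\Phi$ by exactly enough to pay for $H[B_i^*\mid\cdots] + \ex[\mathrm{In}(B_i^*) + (2+h(p))|B_i^*|]$ on that step: removing $w_i$ from $E_i$ releases $\lambda\,\Pot(w_i)$ of potential, which the step at $\sigma_i$ "spends" since the entropy injected there is at least $\Pot(w_i)$ by~\eqref{eq:pot_def}; inserting $B_i^*$ costs $\lambda\,\Pot(B_i^*)$ of potential, which is exactly the weight bundle we assigned above. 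Summing over all $i\ge 1$, the intermediate $\Pot(E_i)$ terms cancel, $\Pot(E_1) = \Pot(B_0^*)$ supplies the $\lambda\ex g(B_0^*)$-ish boundary term (after checking $g(B_0^*)$ matches the collected $q$ and $(2+h(p))$ overhead on the initial flaws — note $g(S) = \lambda^{-1}(p(2+h(p))|S| + q(S))$ is built for exactly this), and the total distance consumed is at most $x + B$ since $P$ is a prefix of $\Sigma \in L(x)$ and each edge has $-\log_2\rho < B$ correction, giving the $\lambda(x+B)$ term.

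The main obstacle I anticipate is \textbf{making the conditioning rigorous on noise steps}: on a noise step the relevant neighborhood is $\Gamma_{\noise}(w_i)$ of size $\Delta_i$, which is not sparse, so the naive per-element bound $\sum_{f\in\Gamma_{\noise}(w_i)}2^{-\text{weight}}$ is \emph{not} small; the whole point of $q_i(p)$ having leading term $p\Delta_i(\co_{\noise}+4)$ is that this cost is incurred with probability $p$ and is folded into the $q$-discount on the \emph{principal} neighbors of $w_i$ (this is why the hypothesis sums $2^{-\mathrm{Amenability}(f_j) + q_j(p)}$ over $\Gamma_{\principal}$, not $\Gamma_{\noise}$). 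Concretely, I would split $H[B_i^*\mid\cdots] = H[B_i^*\mid n_i=0,\cdots]\Pr[n_i=0] + H[B_i^*\mid n_i=1,\cdots]\Pr[n_i=1]$ (using that $N$ is part of the conditioning), bound the first with the sparse $\Gamma_{\principal}$ calculation above, and for the second use only the crude bound $H[B_i^*\mid n_i=1,\cdots] + \ex[\cdots\mid n_i=1] \le \Delta_i\,\co_{\noise} + O(\Delta_i)$, which when multiplied by $\Pr[n_i=1]\le p$ yields exactly a $q_i(p)$-type term that I then redistribute onto the principal neighbors of $w_i$ via the $q_{w_i}(p)$ slack already present in the $\lambda$-amenability condition. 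Checking that the bookkeeping constants ($5/2$, $h(p)$, the factor $4$ in $q_i(p)\le p\Delta_i(\co_{\noise}+4)$, and the $2^{-(2+h(p))}$ threshold) all line up is where the real work lies, but structurally the argument is: local KL bound $+$ causality-restricted universe $+$ telescoping against $\Pot$.
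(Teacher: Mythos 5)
Your overall architecture coincides with the paper's. There the lemma is obtained by combining a per-step bound (Lemma~\ref{meat}: $H[B_i^*\mid B_0^*,\ldots,B_{i-1}^*,L,N]+\ex[\mathrm{In}(B_i^*)+(2+h(p))|B_i^*|]\le\lambda\ex[\Parent(B_i^*)+g(r_i)]$) with a summation step (Lemma~\ref{neato}) that amortizes $\sum_i\ex\pot(r_i)$ against $H[\Sigma]\le x+B$ via the bijection between addressed and introduced flaws --- which is exactly your telescoping of $\lambda\Pot(E_i)$ along~\eqref{eq:ri}, and which produces the $\ex g(B_0^*)$ boundary term the same way. The split on $n_i$, the crude $\Delta_j(\co_{\noise}+\tfrac52+h(p))$ bound on the noise branch folded into $q_j(p)$, and the restriction of $B_i^*$ to $\Gamma_{\principal}(r_i)$ on principal steps (justified in the paper by positivity of $\Pot^-$) all match Lemma~\ref{ns_lemma}.

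The one step that fails as written is the entropy inequality on the principal branch. With $\alpha_f=-(\lambda\Pot(f)-\co_{\principal}^{f}-q_f(p))$, the unconditional bound $H[S]+\ex\sum_{f\in S}\alpha_f\le\log_2\prod_f(1+2^{\alpha_f})$ leaves a strictly \emph{positive} additive slack of order $\sum_f2^{\alpha_f}/\ln 2$ per step, and provides no mechanism to absorb the $(2+h(p))|B_i^*|$ cost: the hypothesis's headroom $2^{-(2+h(p))}$ has to deliver $(2+h(p))$ bits \emph{per element of} $B_i^*$, not a one-off quantity bounded by a constant. Since the number of steps can be as large as $2^{B}(x+B)$, any constant leak per step accumulates to a term linear in $x$ whose coefficient need not be below $1-\lambda$, which would destroy the target bound $H[P]\le\lambda x+M_0$. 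The paper's fix is to condition additionally on $|B_i^*|=k$ --- legitimate because $L$ is already in the conditioning and determines $|B_i^*|$ --- and to apply the log-sum inequality to
\begin{align*}
\sum_{S\in\mathcal{S}_k^{j}}\zeta(S)\;\le\;\Bigl(\sum_{f_i\in\Gamma_{\principal}(f_j)}\zeta(f_i)\Bigr)^{k}\;\le\;2^{-(2+h(p))k}\enspace,
\end{align*}
which yields exactly $(2+h(p))k$ bits of savings and closes the per-step inequality with zero slack (and trivially for $k=0$). With that substitution --- the conditional-on-$k$ form of your generating-function bound in place of the unconditional one --- your sketch goes through and is the paper's proof.
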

Lemma~\ref{meatless} thus implies that under the conditions of Theorem~\ref{Amenability},
\[
H(P) \le m +(1 + h(p) )  \ex |B_0^*| + \log_2 |\Omega| + \ex \mathrm{In}(B_0^*) + \lambda(x+B + \ex g(B_0^*)) \enspace .
\]
Recall that  $\Xi = \max \{  \co_{\noise },  \co_{\principal}\}$ and that $\Delta = \max_{j \in F} \Delta_j$. Since $\ex |B_0^*| \le m$ and $\ex \mathrm{In}(B_0^*) \le m \Xi$, we conclude, as claimed in Theorem~\ref{Amenability}, that $H(P) \le M_0 +\lambda x$, where
\begin{eqnarray*}
 M_0  
 & = &  \log_2 |\Omega | + m  \left( 2 +h(p)     +  \Xi \right)   +  \lambda B + \lambda \ex g(B_0^*) \\
 & \le & \log_2 |\Omega | + m  \left( 2 +h(p)    +  \Xi  \right)  +  \lambda B + [p (2+h(p))+ \max_{j \in F} q_j(p)] m \\
 & \le & \log_2 |\Omega | + m  \left( 6  + \Xi   +  \max_{j \in F} q_j(p)\right)  +  \lambda B  \\
 & \le & \log_2 |\Omega | + m  (\Delta+1)(\Xi+4) +  \lambda B  \enspace . \\
 \end{eqnarray*}

\begin{proof}[Proof of Lemma~\ref{struct_code}]
We will represent $B_0^*,L$ as a binary string $s$ of length $m+2Z-|B_0^*|$. Since $B_0^* \subseteq F$ the first $m$ bits of $s$ are the characteristic vector of $B_0^*$. We encode $L$ immediately afterwards, representing the $i$-th element of $L$, for each $i \in [Z]$, as $1^{|B_i^*|}0$. Decoding, other than termination, is trivial: after reading the first $m$ bits of $s$, the rest of the string is interpreted in blocks of the form $1^*0$. To determine termination we note that, by construction, $|B_0^*| + \sum_{i=1}^j |B_i^*| - j \ge 0$ for every $j \in [Z]$ with equality holding only for $j=Z$. Therefore, decoding stops as soon as equality holds for the first time. The representation of $L$ in this manner consists of $\sum_i |B_i^*|$ ones and $Z$ zeroes, i.e., of $2Z - B_0^*$ bits, since $\sum_i |B_i^*| = Z - |B_0^*|$.

For $\sum_{i} \ex |B_i^*|$ the claim follows readily from the fact $\sum_i |B_i^*| = Z - |B_0^*|$.
\end{proof}

For an arbitrary set of flaws $S$, we define
\[
\Pot^{-}(S) \; = \; \Pot(S) - g(S)  \enspace .
\]
Lemma~\ref{meatless} follows trivially by combining Lemmata~\ref{meat} and \ref{neato} below.
\begin{lemma}\label{meat}
If $C_{\principal}$ is $\lambda$-amenable, then for every $i \ge 1$,
\begin{align*}
 H \left[B_i^* \mid B_0^*,\ldots,B_{i-1}^*, L, N \right] +\ex[\mathrm{In}(B_i^*) + (2+h(p)) |B_i^*|]  
 \le \lambda  \ex[\Parent(B_i^*) +  g(r_i)] \enspace.
 \end{align*}
\end{lemma}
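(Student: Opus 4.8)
The plan is to prove Lemma~\ref{meat} by exhibiting an explicit encoding of $B_i^*$ (given the conditioning information $B_0^*,\ldots,B_{i-1}^*,L,N$) that is efficient enough to beat the right-hand side, combined with the usual entropy-vs-code-length inequality $H[X\mid Y]\le \ex[\text{length of any prefix-free code for }X\text{ given }Y]$. The key structural fact I would exploit is that the conditioning information determines $r_i=w_i$, the flaw addressed in step $i$, via the inductive rule $E_{i+1}=(E_i-w_i)\cup B_i^*$ with $E_1=B_0^*$ (and $L$ tells us when to stop). So the task reduces to: given that $w_i=r_i$ and given which steps are noise steps (that is, given $N$), encode the set $B_i^*$ of flaws introduced-and-later-addressed in step $i$. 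Causality is what makes this cheap: every $f\in B_i^*$ with $f\ne r_i$ satisfies ``$r_i$ causes $f$'', so such an $f$ lies in $\Gamma_{\principal}(r_i)\cup\Gamma_{\noise}(r_i)$ depending on whether step $i$ used the principal or the noise transition. When step $i$ is a principal step, $B_i^*\setminus\{r_i\}\subseteq\Gamma_{\principal}(r_i)$, so we can encode it as a subset of the principal neighborhood; when step $i$ is a noise step we additionally have flaws from $\Gamma_{\noise}(r_i)$, and here $\Delta_i=|\Gamma_{\noise}(r_i)|$ and $\co_{\noise}$ enter — this is precisely where the $q_i(p)$ term is generated, since such a step occurs with probability $p$ and the encoding overhead of naming a subset of a size-$\Delta_i$ set (plus the congestion/inverse-transition bookkeeping governed by $\co_{\noise}$) is charged against the budget $g(r_i)=\lambda^{-1}(p(2+h(p))|\{r_i\}|+q_i(p))$.

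Concretely, I would encode $B_i^*$ in three pieces. (1) A single bit saying whether $r_i\in B_i^*$ (whether the addressed flaw ``introduced itself''). (2) For the principal contribution: since $B_i^*\cap\Gamma_{\principal}(r_i)\setminus\{r_i\}$ is a subset of $\Gamma_{\principal}(r_i)$, encode it by a string that, summed over all $i$, telescopes against $\sum_i\Pot(r_i)$-type quantities — this is where the amenability inequality $\sum_{f_j\in\Gamma_{\principal}(f_i)}2^{-(\lambda\Pot(f_j)-\co_{\principal}^{f_j}-q_j(p))}<2^{-(2+h(p))}$ is applied, via a Kraft-inequality argument showing the per-step cost of naming the newly-introduced principal flaws is at most $\lambda\,\Pot(\cdot)$ of those flaws minus the $\co_{\principal}$ and $q$ corrections, with a constant-size slack of $2+h(p)$ per introduced flaw to absorb the $(2+h(p))|B_i^*|$ term. (3) For the noise contribution (active only when $n_i=1$, which $N$ reveals): name the subset of $\Gamma_{\noise}(r_i)$ introduced, at cost roughly $|\Gamma_{\noise}(r_i)|\le\Delta_i$ bits plus $\co_{\noise}$ bits per introduced flaw, and charge this to $q_i(p)$. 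The term $\mathrm{In}(B_i^*)=(1-p)\sum_{f\in B_i^*}\co_{\principal}^f+p|B_i^*|\co_{\noise}$ on the left is exactly the congestion discount: it is added back because the $\Pot$ on the right measures raw injected entropy, and congestion means some of that entropy was ``wasted'' landing in a bottleneck, so we must pay $\co_{\principal}^f$ / $\co_{\noise}$ bits per flaw to recover the destination — these are the ``inverse transition'' bits that make the encoding decodable.

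The heart of the argument is the Kraft/Shannon step: I would define, for each potential flaw $f_j\in\Gamma_{\principal}(f_i)$, a codeword length $\ell_j$ for ``$f_j$ is in the introduced set of a step addressing $f_i$'' that makes $\sum_j 2^{-\ell_j}\le 1$ (or $\le 2^{-(2+h(p))}$, leaving room for the extra per-flaw slack and the principal/noise-fork bit), reading off $\ell_j$ directly from $\lambda\Pot(f_j)-\co_{\principal}^{f_j}-q_j(p)$; the amenability hypothesis is exactly the statement that such lengths exist. Then $H[B_i^*\mid\cdots]$ is bounded by the expected total codeword length, and summing over $i$ and using that $\sum_i(\text{something}) (f\text{ appears as }r_i) $ telescopes (each flaw introduced at step $i$ is charged against the step at which it is later addressed — this is the role of the Break Sequence localizing introductions in time) yields the bound $\lambda\ex[\Parent(B_i^*)+g(r_i)]$. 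The main obstacle I anticipate is bookkeeping the noise steps correctly: because $N$ is revealed, the decoder knows whether to look in $\Gamma_{\principal}$ or $\Gamma_{\noise}$, but the encoding cost of a noise step must be bounded by $q_i(p)/(\text{something})$ only \emph{in expectation} (the $p\,\Delta_i(\co_{\noise}+\cdots)$ structure of $q_i(p)$), so one has to be careful that the per-step worst-case cost, weighted by the probability $p$ of being a noise step (which is where $N$'s entropy $h(p)\cdot\ex Z$ was already accounted for in~\eqref{poules}), matches $q_i(p)$ term-by-term — in particular getting the constants $5/2$, $2$, and the $h(p)$ in the definition of $q_i(p)$ to come out exactly right from the union of the fork bit, the subset-of-$\Gamma_{\noise}$ naming, and the congestion bits. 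Everything else is routine once the code is fixed, via Lemma~\ref{neato} to pass from the per-step bound to the claimed sum.
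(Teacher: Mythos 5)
Your proposal is correct and follows essentially the same route as the paper: condition on $r_i$, $|B_i^*|$ (via $L$) and $n_i$, split into the principal and noise cases, bound the noise case by the log-count of subsets of $\Gamma_{\noise}(r_i)$ plus congestion and absorb it into $q_j(p)$, and in the principal case read off per-flaw weights $2^{-(\lambda\Pot(f_j)-\co_{\principal}^{f_j}-q_j(p))}$ whose sum over $\Gamma_{\principal}(r_i)$ is below $2^{-(2+h(p))}$ by amenability. The paper phrases the last step via the log-sum inequality with the sub-probability weighting $\zeta(S)$ rather than an explicit prefix code, but that is the standard dual of your Kraft-inequality argument; the only substantive detail you flag yourself (sharpening the subset-naming cost to $\log_2\binom{\Delta_j}{k}\le\Delta_j h(k/\Delta_j)$ and maximizing over $k$ to hit the constant $5/2$ in $q_j(p)$) is exactly how the paper closes the noise case.
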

The proof of Lemma~\ref{meat} is presented in Appendix~\ref{sec:meat}. 
\begin{lemma}\label{neato}
$\sum_i \ex [  \Parent(B_i^*)  + g(r_i)  ]\le   \left(x+B \right) +\ex g(B_0^*) $.
\end{lemma}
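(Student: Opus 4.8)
The plan is to prove the inequality $\sum_i \ex[\Parent(B_i^*) + g(r_i)] \le (x+B) + \ex g(B_0^*)$ by a telescoping/amortization argument over the break sequence, using the recursion $E_{i+1} = (E_i - w_i) \cup B_i^*$ that generates the addressed flaws. Recall $\Parent(S) = \Pot(S) - g(S)$, so the left-hand side expands as $\sum_i \ex[\Pot(B_i^*) - g(B_i^*) + g(r_i)]$. Since $r_i = w_i$ is the flaw addressed at step $i$ (the highest-priority flaw in $E_i$), and $B_i^*$ is the set of flaws introduced at step $i$ that later get addressed, I expect the key identity to be a conservation law for the multiset of ``live'' flaws: every flaw that is ever addressed along the trajectory is either in $B_0^*$ (present initially) or lies in exactly one $B_i^*$ with $i\ge 1$ (introduced at a unique later step, this being precisely the localization-in-time property of Break Forests emphasized in the introduction). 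Consequently $\sum_{i\ge 0} \Pot(B_i^*) = \sum_{i\ge 1}\Pot(r_i)$ and likewise $\sum_{i\ge 0} g(B_i^*) = \sum_{i \ge 1} g(r_i)$, because $g$ and $\Pot$ are both additive over sets of flaws and each addressed flaw $r_i$ contributes to exactly one $B_j^*$.

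Given these conservation identities, $\sum_{i\ge 1}\ex[\Pot(B_i^*) - g(B_i^*) + g(r_i)] = \ex\big[\sum_{i\ge 1}\Pot(r_i)\big] - \ex\big[\sum_{i\ge 1} g(B_i^*)\big] + \ex\big[\sum_{i\ge1} g(r_i)\big]$, and the last two sums cancel up to the $i=0$ term: $\sum_{i\ge 1} g(B_i^*) = \sum_{i\ge 0} g(r_i') - g(B_0^*)$ for the appropriate bookkeeping, leaving $\ex[\sum_{i\ge 1}\Pot(r_i)] + \ex g(B_0^*)$. So it remains to show $\ex\big[\sum_{i=1}^{Z}\Pot(r_i)\big] \le x + B$. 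This is exactly the entropy-injection lower bound: $\Pot(r_i) = \Pot(w_i)$ is by definition~\eqref{eq:pot_def} a lower bound on $H[\rho(\sigma_i,\cdot)]$, the entropy injected in step $i$ while addressing $w_i$ at $\sigma_i$. The total entropy injected along the bad prefix $P$, which has $Z$ steps, is at most the distance of $\Sigma$'s endpoint from the root, and by the stratification construction every leaf of $L(x)$ has $-\log_2\Pr[\Sigma=\ell] \le x + B$. Summing the per-step entropy contributions along the root-to-leaf path and taking expectations gives $\ex\big[\sum_{i=1}^Z \Pot(w_i)\big] \le \ex[-\log_2\Pr[\Sigma=\ell]] \le x+B$.

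The main obstacle I anticipate is making the conservation identity fully rigorous at the level of expectations, because $Z$ is random and the indexing of ``which $B_j^*$ contains a given addressed flaw'' depends on the trajectory. Concretely, I would fix a trajectory, use the inductive definition~\eqref{eq:ri} of $E_{i+1}$ to track each addressed flaw back to its point of introduction, and verify that $r_i \in E_i$ together with the fact that $B_i^*$ excludes the collaterally-fixed flaws $O_i$ and the never-addressed flaws $N_i$ guarantees a bijection between $\{r_i : 1\le i\le Z\}$ (as a multiset) and $\bigsqcup_{i\ge 0} B_i^*$. The second delicate point is the slack in the final step: I need that the per-step entropy $H[\rho(\sigma_i,\cdot)]$ telescopes correctly against $-\log_2 p_v$ along the path — this follows because $-\log_2 p_v = \sum_{i}(-\log_2\rho(\sigma_i,\sigma_{i+1}))$ and $\ex[-\log_2\rho(\sigma_i,\sigma_{i+1}) \mid \sigma_i] = H[\rho(\sigma_i,\cdot)] \ge \Pot(\pi(\sigma_i)) = \Pot(w_i)$, so conditioning step by step and using the tower property yields the bound, with the $+B$ absorbing the truncation overshoot at the leaf.
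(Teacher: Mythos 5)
Your proposal is correct and follows essentially the same route as the paper: the same bijection between the addressed flaws $r_1,\ldots,r_Z$ and the elements of $\bigcup_i B_i^*$ yields the telescoping of the $g$-terms (leaving only $g(B_0^*)$ and discarding $\Pot(B_0^*)\ge 0$), and the same chain-rule/potential argument bounds $\ex\sum_i \Pot(r_i)$ by $H[\Sigma]\le x+B$. Your per-step codelength phrasing with the tower property is just an unrolled version of the paper's computation $H[\Sigma]=\sum_i H[\sigma_{i+1}\mid\sigma_i]\ge\sum_i\ex\,\pot(r_i)$.
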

\begin{proof}[Proof of Lemma~\ref{neato}] 
Since there is a 1-to-1 correspondence between the elements of the witness sequence $r_1, \ldots, r_{Z}$ and the $Z$ elements in the sets $B_0^*, B_1^*, \ldots$
\begin{eqnarray}\label{xesemas}
\sum_{i}  \pot(r_i) & = & \pot(B_0^*) +  \sum_{i} \pot(B_i^*)  \nonumber \\
 & \ge & \sum_{i} \left[ \pot(B_i^*)  - g(B_i^*) + g(r_i)  \right] -   g(B_0^*) \nonumber  \\
& = & \sum_i  [ \Parent(B_i^*) + g(r_i)  ]  -  g(B_0^*) \label{terrible} \enspace.
\end{eqnarray}

The chain rule for entropy gives~\eqref{eq:chain_chain_chain}. Since the evolution of $\Sigma$ is Markovian, inequality~\eqref{eq:almost} would have been an equality if it were not for the possibility that $r_i = \emptyset$. Finally, inequality~\eqref{eq:hahaha} follows from the definition of potential~\eqref{eq:pot_def}. Thus, 

\begin{eqnarray}
H[\Sigma] & = & \sum_{i}H[ \sigma_{i+1} \mid \sigma_{i} ] \label{eq:chain_chain_chain} \\
& = & \sum_{i} \sum_{\sigma \in \Omega } \Pr[ \sigma_{i} = \sigma ] \cdot  H[ \sigma_{i+1} \mid \sigma_i = \sigma ]  \nonumber \\
& \ge & \sum_{i}  \sum_{j \in [m]} \Pr[r_i   = f_j ] \sum_{\sigma \in f_j}  \Pr[ \sigma_i = \sigma \mid r_i = f_j ] \cdot H[ \rho(  \sigma, \cdot ) ] \label{eq:almost} \\
& \ge &  \sum_{i}  \sum_{j \in [m]} \Pr[ r_i = f_j] \cdot    \pot(f_j) \label{eq:hahaha} \\
& = & \sum_{i} \ex \pot(r_i) \enspace .\label{trajectory}
\end{eqnarray}
Combining~\eqref{xesemas} and~\eqref{trajectory} with the fact $H[ \Sigma ] \le x + B$ yields the lemma.
\end{proof}

\section{Proof of Lemma~\ref{meat}}\label{sec:meat}

We need to prove that if $C_{\principal}$ is $\lambda$-amenable, then for every $i \ge 1$,
\begin{equation}
 H \left[B_i^* \mid B_0^*,\ldots,B_{i-1}^*, L, N \right] +\ex[\mathrm{In}(B_i^*) + (2+h(p)) |B_i^*|]  
 \le \lambda  \ex[\Parent(B_i^*) +  g(r_i)] \enspace. \label{july4}
 \end{equation}
 
Recall that  $B_i^* = r_i = \emptyset$ for $i > Z$. Since $L$ determines $Z$, it follows that the conditional entropy in the left hand side of~\eqref{july4} is 0 for $i>Z$. Thus, \eqref{july4} holds trivially for $i>Z$ since $\mathrm{In}(\emptyset) = \pot^-(\emptyset) = g(\emptyset) = 0$.

In the rest of this section we consider an arbitrary but fixed $1 \le i \le Z$. For any such $i$, recall that $B_0^*, B_1^{*}, \ldots,B_{i-1}^{*}$ determine $r_i$ and $L$ determines $|B_i^*|$. Therefore,
\begin{eqnarray*} 
H[B_i^* \mid B_0^*,\ldots,B_{i-1}^*, L, N]  &\le&  H[B_i^* \mid r_i , |B_i^*|, n_i ] \nonumber \\
									 & =&   (1-p) \, H[ B_i^* \mid r_i, |B_i^*| , n_i = 0]  + p  \, H[ B_i^* \mid r_i, |B_i^*|, n_i = 1]       \enspace .
\end{eqnarray*}

For $j \in [m]$ let us denote $ \Pr_j [\cdot] =  \Pr \left[  \cdot \mid r_i = f_j \right]$, $ \ex_j [ \cdot  ] = \ex \left[  \cdot \mid r_i = f_ j  \right]$ and $ H_j[ \cdot  ] = H \left[  \cdot \mid r_i = f_ j  \right]$. With this notation, and recalling~\eqref{eq:indef}, we see that~\eqref{july4} follows from the inequalities in the following lemma, i.e., by multiplying each inequality with the probability that $r_i = f_j$ and summing up over $j \in [m]$. 
\begin{lemma}\label{ns_lemma}
For every $j \in [m]$,
\begin{align*}
H_j [B_i^* \mid n_i = 0 , |B_i^*|] 
+ \ex_j[\mathrm{In}_{\principal}(B_i^*) ]  
+ (2 + h(p)) \, \ex_j \left[ |B_i^*| \mid n_i = 0 \right] 
& \le  (1-p)^{-1} \lambda \ex_j \pot^{-}(B_i^*) \\
\\
H_j[ B_i^* \mid  n_i = 1, |B_i^*|]  
+ \ex_j[\mathrm{In}_{\noise}(B_i^*) ]  
+ (2 + h(p)) \, \ex_j \left[ |B_i^*| \mid n_i = 1 \right]  
& \le p^{-1} \lambda  g(f_j) \enspace .
\end{align*}
\end{lemma}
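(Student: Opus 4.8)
The two inequalities have the same structure: on the left we pay (i) the conditional entropy of the break set $B_i^*$ given its size and the coin $n_i$, (ii) the congestion cost $\mathrm{In}_{\bullet}(B_i^*)$ of the flaws it contains, and (iii) a linear term $(2+h(p))|B_i^*|$; on the right we collect, respectively, the amortized potential $\pot^{-}(B_i^*)$ of the newly introduced flaws (principal branch) or the reserve $g(f_j)$ attached to the addressed flaw (noise branch). The plan is to analyze each branch by a counting/compression argument on the possible values of $B_i^*$, leveraging two facts established earlier: (a) when $n_i=0$, every flaw in $B_i^*$ lies in $\Gamma_{\principal}(f_j)$ — because the $i$-th step follows a principal arc $\sigma_i \xrightarrow{j}\sigma_{i+1}$ and any flaw in $B_i^*$ is either caused by $f_j$ or equals $f_j$; (b) when $n_i=1$, every flaw in $B_i^*\setminus\{f_j\}$ lies in $\Gamma_{\noise}(f_j)$, so $|B_i^*| \le \Delta_j + 1$ in that branch. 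Throughout, I will use that $B_i^*$ only contains flaws that are actually addressed later (by definition of the Break Sequence), which is exactly what lets the potential on the right-hand side "pay back" the entropy bookkeeping.

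\textbf{Principal branch.} Fix $j$ and condition on $r_i=f_j$, $n_i=0$, and $|B_i^*|=k$. The set $B_i^*$ is a $k$-subset of $\Gamma_{\principal}(f_j)$, so
\[
H_j[B_i^*\mid n_i=0,|B_i^*|=k] \;\le\; \log_2 \binom{|\Gamma_{\principal}(f_j)|}{k}\,,
\]
but this crude bound is not tight enough; instead I will bound $H_j[B_i^*\mid n_i=0,|B_i^*|]$ by a weighted sum using the $\lambda$-amenability hypothesis directly. Concretely, I would show
\[
H_j[B_i^*\mid n_i=0,|B_i^*|] + \ex_j[\mathrm{In}_{\principal}(B_i^*)] + (2+h(p))\ex_j[|B_i^*|\mid n_i=0] \;\le\; \ex_j\!\Big[\textstyle\sum_{f\in B_i^*}\big(\co_{\principal}^{f} + \phi(f)\big)\Big] + (2+h(p))\ex_j[|B_i^*|\mid n_i=0]\,,
\]
where $\phi(f)$ is the "per-flaw entropy price" of listing $f$ as a member of $B_i^*$, and then invoke a Kraft-type inequality: since each $f\in B_i^*$ is in $\Gamma_{\principal}(f_j)$ and $\lambda$-amenability gives $\sum_{f\in\Gamma_{\principal}(f_j)} 2^{-(\lambda\pot(f)-\co_{\principal}^{f}-q_f(p))} < 2^{-(2+h(p))}$, the codewords with lengths $\lambda\pot(f) - \co_{\principal}^{f} - q_f(p) - (2+h(p))$ satisfy Kraft, so a prefix-free code for the membership pattern of $B_i^*$ exists whose expected length is at most $\ex_j\sum_{f\in B_i^*}\big(\lambda\pot(f) - \co_{\principal}^{f} - q_f(p) - (2+h(p))\big)$. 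Rearranging and using $\pot^-(S) = \pot(S) - g(S)$ with $g(S) = \lambda^{-1}(p(2+h(p))|S| + q(S))$, together with $\co_{\principal}^{f} = (1-p)\co_{\principal}^{f} + p\co_{\principal}^{f}$ and $\mathrm{In}_{\principal}(S)=\sum_{f\in S}\co_{\principal}^f$, collects everything into $(1-p)^{-1}\lambda\,\ex_j\pot^-(B_i^*)$; the factor $(1-p)^{-1}$ appears because this branch is weighted by $(1-p)$ in~\eqref{july4}. I would need to be careful that the "$-q_f(p)$" slack absorbs the $p\,\co_{\noise}$-type congestion terms and the $p(2+h(p))$ contribution hidden inside $g$ — this is where the specific algebraic form of $q_i(p)$ is used.

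\textbf{Noise branch.} Here I condition on $r_i=f_j$, $n_i=1$. Now the $i$-th step follows a noise arc, so $B_i^*\subseteq \{f_j\}\cup\Gamma_{\noise}(f_j)$ and $|B_i^*|\le \Delta_j+1$. Since we no longer have amenability to spend, we instead bound everything crudely and check it fits inside the budget $p^{-1}\lambda g(f_j)$: the entropy $H_j[B_i^*\mid n_i=1,|B_i^*|] \le \log_2\binom{\Delta_j}{|B_i^*|}\le |B_i^*|\log_2(e\Delta_j)$ (the $f_j$ slot carries no entropy as $f_j$ is determined), the congestion term is $\ex_j[\mathrm{In}_{\noise}(B_i^*)] \le |B_i^*|\co_{\noise} \le (\Delta_j+1)\co_{\noise}$, and the linear term is $\le (2+h(p))(\Delta_j+1)$. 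Summing and comparing against $p^{-1}\lambda g(f_j) = p^{-1}(p(2+h(p)) + q_j(p)) = (2+h(p)) + p^{-1}q_j(p)$, I would verify that $p^{-1}q_j(p) = \Delta_j(\co_{\noise}+\tfrac52+h(p)) - 2 - h(p)$ is exactly engineered to dominate $\Delta_j\log_2(e\Delta_j) + (\Delta_j+1)\co_{\noise} + (2+h(p))\Delta_j$ — actually, since $\log_2(e\Delta_j)$ can exceed $\tfrac52$ for large $\Delta_j$, I expect the real bound to replace $\log_2\binom{\Delta_j}{k}$ by a more careful estimate (e.g. encoding the subset with $\sum_{f\in B_i^*}(\co_{\noise}+\tfrac52+h(p))$ bits via a Kraft argument that uses $\Delta_j$ distinct codewords), which is what makes the "$\tfrac52$" constant appear in $q_i(p)$ rather than something $\Delta_j$-dependent.

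\textbf{Main obstacle.} The crux is the principal branch: making the Kraft/compression argument rigorous requires exhibiting, for each fixed $j$, a genuine prefix-free encoding of the random set $B_i^*$ — conditioned on all the earlier break sets, $L$, and $N$ — whose per-element length is governed by the amenability weights $2^{-(\lambda\pot(f)-\co_{\principal}^f-q_f(p))}$, and then arguing that the entropy is at most the expected codeword length. The subtlety is that $\pot(f)$ is a property of $f$, but the entropy actually injected when $f$ is later addressed depends on the state, and the congestion discount $\co_{\principal}^f$ must be accounted against the right step; reconciling the "listing cost now" with the "entropy earned later" is precisely the amortization that Break Sequences enable (each flaw in $B_i^*$ is addressed exactly once later, at a uniquely determined step), and getting the constants to line up — in particular that the slack $2^{-(2+h(p))}$ on the right of~\eqref{eq:amenable} leaves exactly enough room for the $(2+h(p))|B_i^*|$ term plus the $n_i$-bit of $N$ — is where all the care goes. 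I would isolate this as a standalone combinatorial coding claim and prove~\eqref{july4} by plugging it into the two displayed inequalities.
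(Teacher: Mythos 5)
Your overall decomposition and your treatment of the principal branch track the paper's proof closely: conditioned on $n_i=0$ the set $B_i^*$ is indeed confined to $\Gamma_{\principal}(f_j)$ (the paper reaches the same place by first replacing $B_i^*$ with $B_i^*\cap\Gamma_{\principal}(r_i)$ and invoking nonnegativity of $\pot^-$), and the Kraft-type step you describe is exactly the paper's log-sum (Gibbs) inequality against the weights $\zeta(S)=2^{-\lambda\pot(S)+q(S)+\mathrm{In}_{\principal}(S)}$, combined with the sub-multiplicative bound $\sum_{S}\zeta(S)\le\bigl(\sum_{f_i\in\Gamma_{\principal}(f_j)}2^{-\lambda\Pot(f_i)+\co_{\principal}^{f_i}+q_i(p)}\bigr)^k\le 2^{-(2+h(p))k}$ supplied by $\lambda$-amenability. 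What your sketch leaves implicit, and what you need to make the ``rearranging collects everything into $(1-p)^{-1}\lambda\,\ex_j\pot^{-}$'' step go through, is the auxiliary consequence of $\lambda$-amenability that $\lambda\Pot(f_i)-q_i(p)\ge 2+h(p)$ for every flaw, hence $\lambda\Pot(S)\ge q(S)+(2+h(p))|S|$ and $\Pot^-(S)\ge 0$: this is what lets one both discard the flaws outside $\Gamma_{\principal}(f_j)$ (and the $n_i=1$ contribution to the unconditioned expectation $\ex_j\pot^{-}(B_i^*)$) and convert $\tfrac{1}{1-p}\lambda\pot^{-}(S)$ into the lower bound $\lambda\pot(S)-q(S)$ that the log-sum step consumes. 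This is fillable, but it is a genuine step, not bookkeeping.

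The noise branch is where your proposal has a real gap, one you partly flag yourself. The bound $H_j[B_i^*\mid n_i=1,|B_i^*|=k]\le k\log_2(e\Delta_j)$ overshoots the budget, and the repair you propose --- a prefix-free code assigning each of the $\Delta_j$ candidate elements a codeword of length $\co_{\noise}+\tfrac{5}{2}+h(p)$ --- does not satisfy Kraft once $\Delta_j>2^{\co_{\noise}+5/2+h(p)}$, so no such code exists in general. The paper's actual argument is different and elementary: bound $\log_2\binom{\Delta_j}{k}\le\Delta_j\,h(k/\Delta_j)$ and maximize $\Delta_j\,h(k/\Delta_j)+k(\co_{\noise}+2+h(p))$ over $k\in[0,\Delta_j]$; the maximum equals $\Delta_j\log_2\bigl(1+2^{\co_{\noise}+2+h(p)}\bigr)<\Delta_j\bigl(\co_{\noise}+\tfrac{5}{2}+h(p)\bigr)$. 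The point is that the total cost of this branch is bounded by a quantity proportional to $\Delta_j$ rather than to $k\log_2\Delta_j$, which is exactly the shape of the budget $p^{-1}\lambda g(f_j)=(2+h(p))+p^{-1}q_j(p)$ with $p^{-1}q_j(p)=\Delta_j(\co_{\noise}+\tfrac{5}{2}+h(p))-2-h(p)$; this is where the constant $\tfrac{5}{2}$ comes from. (Also, $\Gamma_{\noise}(f_j)$ as defined already contains $f_j$, so $|B_i^*|\le\Delta_j$ rather than $\Delta_j+1$ in this branch.)
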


\begin{proof}
To lighten notation, let $p_k^c  = \Pr_j[ n_i  = c, |B_i^*| =k ]$ and $p_k^c(S) = \Pr_j[B_i^* = S \mid n_i = c , |B_i^*| = k]$. 

We start with the simpler case $c=1$. For any $j \in [m]$, recalling that $\log_2 \binom{\Delta_j}{k} \le \Delta_j \cdot h(k/\Delta_j)$, we get
\begin{eqnarray}
H_j[ B_i^* \mid  n_i = 1, |B_i^*|]  
+ \ex_j[\mathrm{In}_{\noise}(B_i^*) ]  
+ (2 + h(p)) \, \ex_j \left[ |B_i^*| \mid n_i = 1 \right]  
& \le & \nonumber \\
\sum_{k }  p_k^1  \bigg(   H_j[B_i^* \mid n_i = 1 , |B_i^*| = k]  +  k (\co_{\noise} + 2+h(p))   \bigg)   &\le & \nonumber \\
 \sum_{k} p_k^1  \cdot  \left(  \Delta_j\,h (k/\Delta_j) + k (  \co_{\noise} + 2+ h(p) ) \right)  \nonumber   & \le & \nonumber   \\ 
 \max_{k \in [0,\Delta_j]}  \left\{  \Delta_j\,h (k/\Delta_j) + k (  \co_{\noise} + 2+ h(p) ) \right\}  & < & \nonumber \\
\Delta_j\left( \co_{\noise}+\frac{5}{2}+h(p)\right) \nonumber & = & \\
 p^{-1} \left( q_j(p) + p (2+h(p))\right)  \label{july7} & , & 
 \end{eqnarray}
since $q_j(p) = p\left( \Delta_j\left( \co_{\noise}+\frac{5}{2}+h(p)\right) - 2 - h(p)\right)$.

For the case $c=0$ we need some preparation.  Observe that $C_{\principal}$ being $\lambda$-amenable implies that $\lambda \Pot(f_i)  - q_i(p) \ge 2 + h(p)$ for every flaw $f_i$, as otherwise~\eqref{eq:amenable} would be violated. Therefore, we see that for every set $S  \subseteq F$,
\begin{eqnarray}
\lambda \Pot(S) & \ge & q(S) + |S| \left(  2+ h(p)\right) \label{eq:july6} \\
\Pot^-(S) & \ge & 0 \enspace , \label{eq:potika}
\end{eqnarray}
where~\eqref{eq:potika} follows from~\eqref{eq:july6} since $\lambda \Pot^{-}(S) = \lambda \Pot(S) - q(S) - p|S| (2+h(p))$. The positivity of $\Pot^-$ also implies that for every set $B_i^*$,
\begin{eqnarray} 
\pot^{-} (B_i^*) & = &  \pot^{-} 
\left(
	B_i^* \cap \Gamma_{\principal}(r_i)
\right) + 
\pot^{-} \left(B_i^* \cap \overline{\Gamma_{\principal}(r_i)}\right) \\
& \ge & 
\pot^{-} \left(B_i^* \cap \Gamma_{\principal}(r_i)\right) \enspace .\label{eq:panagos}
\end{eqnarray}
Finally, for $j \in [m]$, we let $\mathcal{S}_k^{j}$ denote the set of all $k$-subsets of $\Gamma_{\principal} ( f_j )$. 

We can now start working towards our goal, which is to prove that for every $j \in [m]$ the first line below is non-negative. We start by invoking~\eqref{eq:panagos} to prove the first inequality below and~\eqref{eq:july6} to prove the second.
\begin{align*}
\lambda \ex_j \pot^{-}(B_i^*) + (1-p)\bigg(- H_j [B_i^* \mid n_i = 0 , |B_i^*|] - \ex_j[\mathrm{In}_{\principal}(B_i^*)] - (2 + h(p)) \, \ex_j \left[ |B_i^*| \mid n_i = 0 \right]\bigg) & \ge  \\ 
\lambda \ex_j \pot^{-}(B_i^* \cap \Gamma_{\principal}(r_i)) \\
+ (1-p) \bigg(- H_j [B_i^* \mid n_i = 0 , |B_i^*|] - \ex_j[\mathrm{In}_{\principal}(B_i^*)] - (2 + h(p)) \, \ex_j \left[ |B_i^*| \mid n_i = 0 \right]\bigg) & = \\
\sum_{k} p_k^0 \sum_{S \in \mathcal{S}_k^{j}} p_k^0(S)
	\left\{
		\lambda \pot^{-}(S) + (1-p) \bigg(\log_2 p_k^0(S) - \mathrm{In}_{\principal}(S) - (2 + h(p)) k \bigg)
	\right\}  
& = \\ 
\sum_{k} p_k^0 \sum_{S \in \mathcal{S}_k^{j}} p_k^0(S) 
	\left\{
		p \lambda \pot^{-}(S) 
		+ 
		(1-p) \bigg( 
				\lambda \pot^{-}(S)  + \log_2 p_k^0(S) -   \mathrm{In}_{\principal}(S) - (2 + h(p) )k 
			\bigg)
	\right\} 
& = \\ 
(1-p) \sum_{k} p_k^0 \sum_{S \in \mathcal{S}_k^{j}} p_k^0(S) 
	\left\{	  
		\frac{p\lambda \pot^{-}(S)}{1-p} + \lambda \pot^{-}(S) + \log_2 p_k^0(S) - \mathrm{In}_{\principal}(S) - (2+h(p)) k 
	\right\}  
& \ge \\
(1-p) \sum_{k} p_k^0 \sum_{S \in \mathcal{S}_k^{j}} p_k^0(S)  \left\{  \lambda \pot(S) -q(S)  +\log_2 p_k^0(S)  - \mathrm{In} _{\principal}(S) - (2 + h(p) )k  \right\}  &  \enspace .
\end{align*}
Letting $\zeta(S) = 2^{-  \lambda \pot(S) + q(S) + \mathrm{In}_{\principal}(S)  }$ it thus suffices to prove that  the left hand side of~\eqref{eq:lhso} is non-negative 
\begin{align}
\sum_{k} p_k^0 \left\{-(2 + h(p) )k +\sum_{S \in \mathcal{S}_k^{j}}  p_k^0(S) \log_2 \frac{p_k^0(S)}{\zeta(S)}\right\} \ge   & \label{eq:lhso} \\
\sum_{k} p_k^0 \left\{- (2 + h(p) )k- \log_2 \left( \sum_{S \in \mathcal{S}_k^{j}} \zeta(S) \right)\right\} \label{entropia} \enspace ,
\end{align}
where the inequality follows from the log-sum inequality. To prove that the right hand side of~\eqref{entropia} is non-negative we note that for all $k \geq 1$,
\begin{eqnarray*}
\sum_{S \in \mathcal{S}_k^{j}} \zeta(S) 
 \le
\left(\sum_{A \in \mathcal{S}_{k-1}} \zeta(A)\right) \left(\sum_{B \in \mathcal{S}_{1}} \zeta(B) \right)
\le \ldots  \le \left( \sum_{B \in \mathcal{S}_{1}} \zeta(B) \right)^k \enspace .
\end{eqnarray*}
Therefore, since $C_{\principal}$ is $\lambda$-amenable,
\begin{align*}
\sum_{S \in \mathcal{S}_k^{j}} \zeta(S)   \le  \left(\sum_{f_i \in \Gamma_{\principal}(f_j)}    2^{  -     \lambda \Pot(f_i) +  {\co_{ \principal }^{f_i}}   + q_i(p)   } \right)^k\le \left(2^{-(2+ h(p))}\right)^k \enspace .
\end{align*}
\end{proof}

\bibliographystyle{plain}

\bibliography{POMDP}

\end{document}